\renewcommand*\l@section[2]{%
  \ifnum \c@tocdepth >\z@
    \addpenalty\@secpenalty
    \addvspace{1.0em \@plus\p@}%
    \setlength\@tempdima{1.5em}%
    \begingroup
      \parindent \z@
      \rightskip \@pnumwidth
      \parfillskip -\@pnumwidth
      \leavevmode
      \bfseries\boldmath
      \advance\leftskip\@tempdima
      \hskip -\leftskip
      #1\nobreak\hfil
      \nobreak\hb@xt@\@pnumwidth{\hss #2\kern-\p@\kern\p@}\par
    \endgroup
  \fi
}
\newtheorem{theorem}{Theorem}[section]
\newtheorem{corollary}[theorem]{Corollary}
\newtheorem{lemma}[theorem]{Lemma}
\newtheorem{definition}[theorem]{Definition}
\newtheorem{problem}[theorem]{Problem}
\newtheorem{fact}[theorem]{Fact}
\DeclareMathOperator*{\E}{\mathbb{E}}
\newcommand{\R}{\mathbb{R}}
\newcommand{\C}{\mathbb{C}}
\newcommand{\calS}{\mathcal{S}}
\newcommand{\calP}{\mathcal{P}}
\newcommand{\eps}{\epsilon}
\newcommand{\iprod}[1]{\left\langle #1 \right\rangle}
\renewcommand{\vec}[1]{{\bm{#1}}}
\renewcommand{\norm}[1]{\left\| #1 \right\|}
\newcommand{\opnorm}[1]{\norm{#1}_{\infty}}
\newcommand{\wh}{\widehat}
\newcommand{\tmax}{\tau_{\max}}
\newcommand{\tmin}{\tau_{\min}}
\newcommand{\tv}{\mathrm{tv}}
\newcommand{\mc}{\mathcal}
\newcommand{\nnz}{\mathrm{nnz}}
\newcommand{\Z}{\mathbb{Z}}
\newcommand{\dph}{D_{\mathrm{phase}}}
\newcommand{\jerry}[1]{\textcolor{blue}{[jerry: #1]}}
\newcommand{\ziyun}[1]{\textcolor{purple}{[ziyun: #1]}}
\newcommand{\joe}[1]{\textcolor{teal}{[Joe: #1]}}
\title{Lower Bounds for Learning Hamiltonians from Time Evolution}
\author{Ziyun Chen\thanks{University of Washington, ziyuncc@cs.washington.edu. Supported by a Simons investigator award 928589, and an NSF grant CCF-2203541.} \and Jerry Li\thanks{University of Washington, jerryzli@cs.washington.edu.} \and Joseph Slote\thanks{University of Washington, jslote@cs.washington.edu.}}
\begin{document}

\date{}

\maketitle
\thispagestyle{empty} 

\begin{abstract}
    Learning about a Hamiltonian $H$ from its time evolution $e^{-iHt}$ is a fundamental task in quantum science.
    A flurry of recent work has developed powerful new algorithms with provable guarantees for this task, for a variety of natural settings.
    Despite this, relatively little is known about lower bounds for learning Hamiltonians.
    In particular, in the natural setting where we assume $H$ is a $k$-local Hamiltonian on $n$ qubits, all existing algorithms require total evolution time at least $n^{\Omega (k)}$ to learn the parameters of $H$, and it remained open whether one could obtain even faster algorithms---or at the very least, if one could obtain better runtimes for simpler tasks, such as estimating a single designated coefficient of the Hamiltonian.

    In this work we show the answer is essentially \textit{no}, by obtaining strong lower bounds for these problems.
    We find that not only do $k$-local Hamiltonians require $n^{\Omega(k)}$ time evolution or interactions to learn, but also that in several senses, ``learning \textit{anything} about a Hamiltonian is just as hard as learning \textit{everything}.''
    In particular, we find the same $n^{\Omega(k)}$ lower bound holds for learning a \textit{single coefficient} of a $k$-local Hamiltonian $H$, even if the rest of $H$ is already known.
    We also show an $n^{\Omega(k)}$ lower bound for the task of \textit{effective} Hamiltonian learning, where one seeks only to learn a unitary that approximates the time evolution of $H$.
    Several related lower bounds, such as for general sparse (but not necessarily local) $H$ are also given.

    On the technical side, we make a new connection between Hamiltonian learning lower bounds and the analysis of Boolean functions, where we introduce a novel extremal property that may be of independent interest.
    Specifically, we demonstrate that Hamiltonian learning lower bounds follow from the existence of low-degree functions $f$ that simultaneously have (\textit{i}) bounded Fourier coefficients and (\textit{ii}) large absolute minima on the hypercube, \textit{i.e.}, large $\min_{x\in\{\pm 1\}^n}|f(x)|$.

\end{abstract}

\newpage
\thispagestyle{empty} 
\tableofcontents
\clearpage

\pagenumbering{arabic} 
\setcounter{page}{1}   

\vspace{-2em}

\section{Introduction}
In this paper we consider the problem of \emph{learning Hamiltonians from time evolution}, a basic question in the study of quantum inference. 
Here we are given the ability to evolve states of our choice over time, according to an unknown quantum process as prescribed by Schr\"{o}dinger's equation, and the goal is to characterize properties of this process.
Such questions have a long history in quantum metrology and sensing~\cite{caves1981quantum,wineland1992spin,holland1993interferometric,bollinger1996optimal,valencia2004distant,lee2002quantum,mckenzie2002experimental,leibfried2004toward,de2005quantum}, quantum device engineering and testing~\cite{boulant2003robust,shulman2014suppressing,sheldon2016procedure,innocenti2020supervised,sundaresan2020reducing}, and many-body physics~\cite{wiebe2014quantum,wiebe2014hamiltonian,wang2017experimental,verdon2020quantum}. 

More formally, assume we have an $n$-qubit Hamiltonian
\[H = \sum_{P \in \{I, X, Y, Z \}^{\otimes n}} \alpha_P P\] encoding the interaction between qubits.
Here the sum is taken over all $n$-qubit Pauli matrices, and the Pauli coefficients $\alpha_P$ are the parameters that determine $H$.
The goal is to determine basic properties of these parameters, given the ability to evolve states according to $H$, \textit{i.e.}, (adaptively) apply the unitary $e^{-iHt}$, interleaved with general $H$-independent quantum channels and measurement (see Definition~\ref{def:general-access}).
There are two key and closely related measures of complexity for such an interaction: the total number of rounds of interaction, and the total evolution time required by the algorithm.
Clearly, for any algorithm in this setting to be considered efficient, we will want both quantities to be polynomial in the system size.

This question has recently received considerable attention from the quantum computing community.
A recent line of work has demonstrated a number of algorithms for learning Hamiltonians from time evolution~\cite{huang2023learning,bakshi2024structure,ma2024learning,hu2025ansatz}.
Typically, these papers focus on parameter recovery, \textit{i.e.}, their goal is to learn all of the coefficients to uniform error $\eps$, although there are also other natural learning objectives, such as learning good approximations to $H$ in spectral norm or other non-parametric notions of closeness, see \textit{e.g.}~\cite{bluhm2024hamiltonian}.
For parameter recovery, the best-known upper bounds require total time evolution cost and number of interactions with the Hamiltonian scaling as $\mathrm{poly} (\nnz(H)) / \eps$, where $\nnz(H)$ is an \emph{a priori} bound on the number of nonzero Pauli coefficients in the Hamiltonian.
This scaling with $\eps$ is commonly referred to as \emph{Heisenberg scaling}, and it is a folklore result that this is optimal, see \textit{e.g.}~\cite{bakshi2024structure}.

On the other hand, relatively little is known about the dependence on the locality $k$ of the Hamiltonian, or more generally, on the total number of nonzero terms $\nnz(H)$ in the Hamiltonian.
All known algorithms incur a polynomial cost in this parameter $\nnz(H)$. 
In particular, they require total evolution time $n^{O(k)}$ to learn an arbitrary $k$-local Hamiltonian.
However, prior to our work, there were essentially no lower bounds that scaled with $\nnz(H)$ for parameter learning.
While several works studied lower bounds for Hamiltonian learning from time evolution, the lower bounds they obtain either do not grow with system size, only hold under noisy measurements, or are for different notions of recovery, such as recovery in relative spectral norm (see Section~\ref{sec:related-work} for a more detailed discussion).
Indeed, obtaining a lower bound for parameter recovery that scales appropriately with $\nnz(H)$ was posed as an open question by~\cite{bakshi2024structure}, as well as at a recent workshop at FOCS~\cite{FOCS2024QuantumLearningOpenQuestions}.

\subsection{Our Results}

\paragraph{A lower bound for last parameter recovery.}
Our main result is a new lower bound for the parameter recovery problem that scales polynomially with the number of non-zero parameters of the Hamiltonian.
In fact, we show a strong lower bound against a significantly easier problem.
Namely, we show that it is hard to learn a single Pauli coefficient of the Hamiltonian, even if the algorithm knows which coefficient they are targeting ahead of time, and moreover, even if the algorithm is given \emph{the values of all of the other parameters of the Hamiltonian}.
We call this problem the \emph{last parameter recovery} problem (see~\Cref{def:last-param-recovery}).
Perhaps surprisingly, we show that this problem still requires runtime which scales polynomially with $\nnz(H)$.
More concretely:
\begin{theorem}[informal, see Corollaries~\ref{cor:last-param-general} and~\ref{cor:last-param-local}]
\label{thm:informal-last-parameter}
    Let $k \leq n$.
    Let $P$ be a known $1$-local Pauli matrix.
    Then, there exists a $k$-local Hamiltonian $M = \sum_{Q \neq P} \alpha_Q Q$ so that:
    \begin{itemize}
        \item  $|\alpha_Q| \leq 1$ for all $Q$, and
        \item For all $\beta\in[-1,1]$, any algorithm which, given access to $H=M+\beta P$, learns $\beta$ to additive error $\eta \leq 1/2$ with non-trivial probability, requires that
        \[\max(1,\eta\cdot T)\cdot m\geq \frac{(2n/k)^{\Omega(k)}}{\eta}\,,\]
        where $m$ is the total number of rounds of interaction with the Hamiltonian, and $T$ is the total evolution time required by the algorithm.
    \end{itemize}
\end{theorem}
We pause here to make several comments on our result.

\medskip
\noindent\textit{Interpreting the parameters.}
First, in \Cref{thm:informal-last-parameter} we stipulate that the Pauli coefficients are all uniformly bounded by $1$.
An \emph{a priori} upper bound on the scale of the coefficients is necessary to make this problem well-defined, and our choice of $1$ is canonical in the literature.
However, it turns out that this lower bound still holds if we significantly relax this bound for the unknown parameter $\beta$, enlarging $\eta$ proportionately.
While this would intuitively make the problem much easier---as a large ``spike'' on $P$ should make its presence more noticeable---we find that even when the size of the spike $\beta$ is $(2n / k)^{\Theta (k)}$, this problem remains hard.
We also observe that if we let $\eta \to 0$, then we recover the Heisenberg scaling bound, and moreover, we demonstrate that one cannot decouple the dependence on $1/\eta$ and $(2n / k)^{\Theta (k)}$.
See \Cref{cor:last-param-local} and \Cref{cor:last-param-general} for the formal statements. 

Our bound here can be interpreted as saying that either the total time evolution $T$ or the total number of interactions $m$ must scale as $\Omega (n / k)^{\Omega (k)}$, \textit{i.e.} polynomially with the number of total parameters in the system, even when the goal is to simply learn a single parameter.
In particular, our bound implies that as long as the minimum time resolution of the algorithm (\textit{i.e.}, the minimum amount of time that the algorithm evolves the Hamiltonian for in any step) is at least $(n / k)^{-o(k)}$, then the total evolution time of the algorithm must scale as $\Omega (n / k)^{\Omega (k)}$.
In most practical settings, it is unrealistic to assume anywhere near that degree of control, and in fact, usually anything below constant time resolution is considered impractical~\cite{bakshi2024learning}.
Thus, at the risk of slightly oversimplifying, our result essentially states that any realistic algorithm for learning even \emph{one} parameter, given all other possible knowledge about the Hamiltonian, still cannot hope to circumvent this polynomial dependence on $\nnz(H)$.

Next, note that our lower bound also immediately implies the same lower bound holds for learning all the parameters of the Hamiltonian, since our task is strictly easier.
Hence, our lower bound resolves the open question of~\cite{bakshi2024learning}, for which E. Tang owes us $\$ 2$.

\medskip
\noindent\textit{About the learning model.}
We emphasize that our lower bound applies to a very general class of Hamiltonian learning algorithms: we assume the learner accesses the Hamiltonian only through its time-evolution operator, but is otherwise allowed arbitrary classical and quantum processing. In particular, the learner may perform intermediate partial measurements and then use the resulting classical outcomes to choose future processing and Hamiltonian evolution times adaptively. Thus our model includes the standard setting of mid-circuit measurement and classical feedforward, which we formalize as a learning tree for Hamiltonian learning (see~\Cref{def:general-access}).

This is technically more general than the models considered in previous lower bounds such as~\cite{bluhm2024hamiltonian,huang2023learning,ma2024learning}. There, the learner proceeds through a sequence of \emph{experiments} in which Hamiltonian evolutions are interleaved with unitary controls, but the controls and evolution times within each experiment are fixed \textit{a priori}, and the resulting state is fully measured at the end of the experiment. Subsequent experiments may depend on previous measurement outcomes, but they do not allow quantum states to persist between adaptive operations. We refer to this more restrictive model as the \emph{sequential experiments model} (see~\Cref{def:access}).

The sequential experiments model is of course still natural and important, and it captures essentially all state-of-the-art algorithms. However, it does not capture all physically realizable learning strategies. Our lower bound continues to hold in the more general setting described above, including algorithms with intermediate partial measurement, classical feedforward, and adaptive choice of later Hamiltonian evolutions.

Relatedly, we also note that our lower bound holds irrespective of any amount of additional quantum memory.
It also holds even if the algorithm is allowed access to backwards time evolution of $H$, \textit{i.e.} the ability to run $e^{i H t}$ for $t > 0$.




\paragraph{Lower bounds for parameter recovery for sparse Hamiltonians.}
We now turn towards lower bounds for parameter estimation for sparse Hamiltonian ansatz.
Recently, several works have demonstrated that if we know that the support of the Hamiltonian $H$ lies within some small set; that is, $H = \sum_{P \in \mathcal{P}} \alpha_P P$ for some small set of Pauli matrices $\mathcal{P}$, then one can hope to learn $H$ to good accuracy in time which scales polynomially with $|\mathcal{P}|$ (in fact, linearly).
A natural question is whether or not this polynomial dependence on $|\mc P|$ is necessary.
Prior work of~\cite{ma2024learning} demonstrated that this was necessary for the sequential experiments model in the presence of SPAM noise, but left open the possibility that this dependence could be improved without this additional restriction.

Our next result shows that this dependence cannot be removed, even without SPAM noise, at least for the sequential experiments model:
\begin{theorem}
\label{thm:all-parameters-main}
    Let $\mathcal{P}$ denote a set of $n$-qubit Pauli matrices.
    Suppose there is an algorithm in the sequential experiments model (see~\Cref{def:access}) which, given $\mathcal{P}$, and time evolution to an unknown Hamiltonian $H = \sum_{P \in \mathcal{P}} \alpha_P P$ satisfying $|\alpha_P| \leq 1$ for all $P$, can output with probability $\geq 0.1$ an estimate $\{\wh \alpha_P \}_{P \in \mathcal{P}}$ so that $|\alpha_P - \wh \alpha_P| \leq 0.99$ for all $P \in \mathcal{P}$.
    Then, this algorithm must use at least $ m \geq \Omega \left( |\mathcal{P}| / n \right)$ rounds of interaction with $H$.
\end{theorem}
\noindent
We pause here to make several remarks about this result.
Our lower bound is for the total number of rounds of interaction $m$; said another way, we show that any circuit solving this problem requires high depth, which is arguably the strongest form of lower bound for this problem.
Our result implies that as long $|\mathcal{P}| = \Omega ( n^{1 + c})$ for some constant $c$, then the total number of interactions must grow polynomially in $|\mathcal{P}|$, the number of nonzero terms.
As an immediate corollary, our result implies that, in this regime, any algorithm that has constant minimum time resolution requires $\mathrm{poly} (|\mc P|)$ total evolution time.
In particular, for the class of $k$-local Hamiltonians for $k \geq 2$, our result implies that $\widetilde{O}((2n / k)^{k-1})$ interactions are necessary for this problem.

The closest comparison to this lower bound is the aforementioned lower bound of~\cite{ma2024learning}.
In~\cite{ma2024learning}, they show that any algorithm that learns the parameters to $\ell_1$ error $\eps$ in the presence of SPAM noise must use $\Omega(|\mathcal{P}|)$ total evolution time.
Our result is weaker quantitatively by a factor of $n$, however, ours is for a weaker notion of recovery, namely, recovery in $\ell_\infty$.
Additionally, as already mentioned above, our lower bound does not require the presence of SPAM noise.

Second, we note that this lower bound only holds against algorithms in the sequential experiments model.
We believe this restriction can be lifted with more sophisticated analysis, and we conjecture this lower bound can be extended to general algorithms.
However, we emphasize that this class of algorithms is already very general.
Indeed, all prior lower bound papers were also against this class of algorithm, and to our knowledge, all state-of-the-art algorithmic techniques for Hamiltonian learning fall within the purview of this lower bound.

\paragraph{Lower bounds for Hamiltonian approximation.}
In light of these lower bounds for any sort of meaningful parameter recovery, a natural question is whether or not we can hope to circumvent these bounds for other, perhaps weaker, notions of learning.
In other words, can we hope to recover something which is functionally equivalent to the Hamiltonian, but which may not be close in parameter distance?
If this sort of ``non-parametric'' recovery was possible, it would suffice for many downstream applications, as it would allow us to recover an object which is more or less functionally equivalent to the original Hamiltonian.

Unfortunately, we show that for the class of $k$-local Hamiltonians, this problem is still hard for algorithms in the sequential experiments model, even for a very weak notion of approximation:
\begin{theorem}[informal, see~\Cref{thm:effective-main}]
\label{thm:effective-informal}
    Let $t > 0$.
    Any algorithm in the sequential experiments model which, given time evolution to an unknown $k$-local Hamiltonian $H$ with spectral norm at most $1$, can output a non-trivial approximation to the quantum channel induced by $e^{-iHt}$, requires
    \begin{equation}
    \label{eq:effective-bound}
        m = \left( \frac{n}{k} \right)^{\Omega (k)}
    \end{equation}
    queries to the Hamiltonian.
\end{theorem}
We make several remarks about this bound.
First, this is a very weak notion of non-parametric approximation to $H$, as it does not require the algorithm to even recover $H$ directly, but only its induced dynamics.
In particular, when $t = O(1)$, this is a strictly easier task than the problem of learning $H$ to good spectral norm error.
Therefore, our result also immediately implies that the same lower bound holds even when the goal is to learn $H$ to good spectral norm error.
Relatedly, notice that our only assumption is that $H$ has bounded spectral norm.
This is an even stronger assumption than we made before for parameter recovery, since this immediately implies that all of its parameters are at most $1$.
Yet even with this stronger assumption, we show that the problem is still hard.

Second, as with the previous bound, this bound is against algorithms in the sequential experiments model, although we conjecture that this result remains true for more general classes of algorithms.

Third, for this problem, the trivial algorithm which outputs the identity channel always achieves error $\min (1, t)$ (see~\Cref{sec:prelim} for the formal definition of the error metric).
Our formal lower bound is quite quantitatively strong: in fact, we show that improving upon this trivial bound by a sufficiently large constant factor already requires $m$ to satisfy~\Cref{eq:effective-bound}.

\paragraph{An ``all-or-nothing'' phenomenon for Hamiltonian learning.} 

Our results---especially our results for last parameter learning and effective Hamiltonian learning---demonstrate that there is a qualitative ``all-or-nothing'' behavior for Hamiltonian learning.
Somewhat oversimplifying, our lower bounds state that to learn any information about a Hamiltonian $H$, then in many, if not most, regimes of interest, $\mathrm{poly}(\nnz(H))$ time is required.
On the other hand, we know from previous work of~\cite{bakshi2024learning,ma2024learning} that given $\mathrm{poly}(\nnz(H))$ evolution time, we can learn all of the information about the Hamiltonian.
In other words, one cannot extract any information from the Hamiltonian, without having essentially enough resources to extract all of the information from the Hamiltonian.

\subsection{Our Techniques}
\label{sec:techniques}
We now describe the intuition for our lower bound constructions.

\paragraph{Lower bound construction for last parameter learning.}
We first describe the construction for~\Cref{thm:informal-last-parameter}.
Intuitively, for the last parameter to be hard to learn, it makes sense for it to ``interfere'' with all of the other terms in the Hamiltonian, and thus we should choose it to be maximally non-commuting with the rest of the Hamiltonian.
Motivated by this, we consider an instance where the last unknown Pauli $\Delta$ is a single, $1$-local $X$ Pauli, and the rest of the Hamiltonian $H$ is supported only on $Z$ Paulis.

To formalize this, write $H$ as $H = \sum_{P \in \{I, Z\}^n} \alpha_P P$, for $|\alpha_P| \leq 1$.
In this case, $H$ is a diagonal matrix, and its entries are unnormalized Fourier transforms of a Boolean function specified by the $\alpha_P$.
Notably, even though $|\alpha_P| \leq 1$ for all $P$, the diagonal entries of $H$ may be exponentially large.
Now, the effect of adding $\Delta$ to this Hamiltonian is that now the resulting Hamiltonian is block-diagonal, with blocks of size $2$ corresponding to vertices of the hypercube that differ only in one coordinate.
Therefore, $e^{-i(H + \Delta) t}$ also shares this block-diagonal structure, and we will show that each block is close to the corresponding block of $e^{-iHt}$; this will imply that the two evolutions act indistinguishably from one another.

The key point is that this will be true so long as the difference in the diagonal elements on each block dominates the size of the off-diagonal perturbation due to $\Delta$ (see~\Cref{lem:matrix-bound}), and it suffices to demonstrate a choice of $\alpha_P$ that has this property.
We show that the existence of such an $\alpha_P$ is implied by the existence of a function on the hypercube with bounded Fourier coefficients, but which is exponentially large pointwise.
Motivated by this, we make the following definition:
\begin{definition}[Fourier-normalized absolute minima]
\label{def:fnam}
    Let $f: \{\pm 1\}^n \to \R$ be a multilinear function on the hypercube.
    We define the \emph{Fourier-normalized absolute minima} of $f$, denoted $\gamma (f)$, to be
    \[
    \gamma (f) = \frac{\min_{x \in \{\pm 1\}^n} |f(x)|}{\max_{S\subseteq[n]} |\widehat{f} (S)|} \; .
    \]
\end{definition}
\noindent
Here, $\widehat{f} (S)$ denote the standard $S$'th Boolean Fourier coefficient of $f$ (see~\Cref{sec:prelim} for formal definitions).
Formally, our first key insight (see~\Cref{thm:reduction}) is to show that the hardness for last parameter learning for $k$-local Hamiltonians can be reduced to demonstrating a degree $d = (k-1)$ polynomial with large Fourier-normalized absolute minima.

It then remains to construct $f$ with large Fourier-normalized absolute minima.
We give three constructions of such functions, which are suitable for different parameter regimes.
First, when $d = \Theta (n)$, we give a probabilistic construction of a function $f$ with $\gamma (f) = 2^{(1/2 - o(1)) n}$ (see Lemma~\ref{lem:boolean-instance}).
We then give a construction of a quadratic function with $\gamma (f) = \Omega (\sqrt{n})$, and show how to boost this to a construction of a degree $d$ function with $\gamma (f) = \Omega(\sqrt{n / d})^{\lfloor d / 2 \rfloor}$.
However, for $k > 2$, we believe the bound obtained from boosting is loose, and thus would yield loose quantitative bounds for the last parameter learning problem.

We are able to get a stronger lower bound in the regime of $d \leq O(n^{1/3})$.
In particular, we show (see~\Cref{thm:high-disc-poly}) that for such $d$, there is a degree-$d$ function $f$ satisfying
    \[
    \gamma (f) = \Omega \left(  \frac{1}{d^{5}} \cdot \left( \frac{n}{d} \right)^{\tfrac{d - 1}{2}} \right) \; .
    \]
Explicitly, this function $f$ is an elementary symmetric (multilinear) polynomial of degree $d$, for a number of variables $n^{*}$ close to $n$.
While the description of this function is quite simple, the analysis of its Fourier-normalized absolute minimum turns out to be quite non-trivial, and requires a careful analysis using specialized tools from the theory of orthogonal polynomials.

Our analysis of $\gamma(f)$ begins with the observation that because $f$ is symmetric, we may consider it as a univariate degree-$d$ polynomial, and in particular as a (binary) Krawtchouk polynomial $K_d(|x|;n)=f(x)$, where $|x|=|\{j:x_j=-1\}|$.
The main idea of our argument is to find an $n^*$ not too far from $n$ for which all the roots of $K_d(\,\cdot\,;n^*):\R\to\R$ are $\Omega_d(1)$-far from $\Z$.
Using that $K_d(\,\cdot\,;n^*)$ is degree-$d$ with well-spaced roots, it follows that $K_d(\,\cdot\,;n^*)$ is sufficiently large at every integer, and hence $\gamma(f)=\gamma(e^{(n^*)}_d)$ is large too.

To find a suitable $n^*$, we show the $j$\textsuperscript{th} root of $K_d(\,\cdot\,;n)$ is often far from $\Z$ as \textit{the number of variables} $n$ changes (but $d$ stays fixed), and then union bound over the $d$-many roots in $K_d(\,\cdot\,;n)$.
The key technical fact we must argue is a \textit{root increment lemma}: as $n$ changes, the $j$\textsuperscript{th} root shifts in a fixed direction by more than 1, but not too much more.
To do this we make use of the Krawtchouk polynomials' three-term recurrence and analyze the motion of eigenvalues of the associated Jacobi matrix.
This is the subject of \Cref{prop:root-inc}.

\paragraph{Lower bounds for learning sparse Hamiltonians and effective Hamiltonian learning.}
We now describe the main technique for our lower bound for learning sparse Hamiltonians, as well as our lower bound for learning effective Hamiltonians.
We show both via a shared framework.
The geometric insight is that the quantum state after $m$ queries to an $n$-qubit Hamiltonian can lie within a subspace of dimension at most $2^{m n}$, and thus by Holevo's inequality, we can extract at most $O(m n)$ bits of information about the Hamiltonian from a measurement of this quantum state.
Thus, by Fano's inequality, if it suffices to show that to solve the appropriate Hamiltonian estimation tasks, we need to extract many bits of classical information from the Hamiltonian.

This is fairly straightforward to do for parameter estimation for sparse Hamiltonians: here, if the set of Paulis with non-zero coefficients is $\mathcal{P}$, then we need to extract $\Omega (|\mc P| )$ bits if information to recover all of the Pauli coefficients.

Showing this for the effective Hamiltonian learning task is more non-trivial.
The key lemma we need is that there exists a large family of degree $k$ polynomials on the hypercube so that \textit{(i)} for all functions $f$ in this family, $|f(x)| \leq 1$ for all $x \in \{\pm 1\}^n$, and \textit{(ii)} for all $f \neq g$ in this family, we have that $|f(x) - g(x)| \geq \delta$ for some $x \in \{\pm 1\}^n$, for some constant $\delta > 0$.
This we do via the probabilistic method: we show that if we take our family of polynomials to have random Gaussian coefficients of the appropriate scale, then after throwing out all pairs of polynomials which violate Conditions (\textit{i}) and  (\textit{ii}), we still expect that the number of remaining polynomials should be large (see~\Cref{lem:packing-ell-infty}). 

We then use this lemma to directly exhibit a large family of $k$-local Hamiltonians so that their induced dynamics are all mutually far from each other, where the size of the family is exponential in $F(n, k)$, where
    \[
    F(n,k) \coloneqq \max_{t\in(0,\frac12 - \frac{1}{\log n})}  \min\left(n^{(1/2-t)k},\exp(cn^{2t})\right) \; .
    \]
By choosing a suitable $t$, and plugging this into the above framework, we obtain our overall lower bound.

\subsection{Related Work}
\label{sec:related-work}
There is a large literature on parameter learning Hamiltonians, both from thermal states~\cite{anshu2021sample,anshu2021efficient,haah2022optimal,haah2024learning,bakshi2024learning,fawzi2024certified,garcia2024estimation,narayanan2024improved,arunachalam2025testing}, as well as from time evolutions~\cite{ramsey1950molecular,lee2002quantum,giovannetti2004quantum,de2005quantum,shabani2011estimation,da2011practical,wiebe2014quantum,wiebe2014hamiltonian,bairey2019learning,zubida2021optimal,haah2022optimal,caro2024learning,odake2024higher,flynn2022quantum,gentile2021learning,haah2024learning,huang2023learning,li2024heisenberg,mirani2024learning,ni2024quantum,bakshi2024structure,ma2024learning,hu2025ansatz,arunachalam2025testing,zhao2025learning,sinha2025improved,francca2025learning}, the topic of this paper.
Much of the early literature on this topic was non-rigorous in nature.
The first rigorous guarantees were obtained in~\cite{haah2024learning}, and subsequently there has been a flurry of work on this topic.
Work of~\cite{huang2023learning} gave the first algorithms that achieved Heisenberg scaling.
Subsequent work of~\cite{li2024heisenberg,ni2024quantum,mirani2024learning} generalized these results to bosonic and fermionic Hamiltonians and work of~\cite{bakshi2024structure,ma2024learning} have also yielded efficient algorithms for the structure learning problem, and works of~\cite{zhao2025learning,dutkiewicz2024advantage} have achieved similar rates under different models of control or interaction.
However, almost all of the literature on this topic is for achieving upper bounds, all of which scale polynomially in the number of interacting terms in the Hamiltonian.

In terms of lower bounds, the literature is significantly sparser.
There are lower bounds for learning other classes of quantum transformations, such as Pauli channels~\cite{chen2024tight,chen2025efficient} and unitary channels~\cite{haah2023query}, as well as lower bounds for learning Hamiltonians from Gibbs states~\cite{anshu2021sample}, but these do not translate to lower bounds for the setting we consider here.
For the problem studied here, beyond the ``folklore'' lower bound of $\Omega (1 / \eps)$, the most relevant lower bounds for learning Hamiltonians from time evolution are given in~\cite{huang2023learning,ma2024learning,dutkiewicz2024advantage,bluhm2024hamiltonian,hu2025ansatz}.
However, the lower bounds in~\cite{dutkiewicz2024advantage,hu2025ansatz} does not scale with the number of interacting terms.
Rather,~\cite{dutkiewicz2024advantage} demonstrates that quantum control is necessary to obtain Heisenberg scaling, and this relationship is further refined in~\cite{hu2025ansatz}.
The lower bounds in~\cite{huang2023learning,ma2024learning} are for a setting where the every application of the Hamiltonian incurs some degree of SPAM noise; this effectively allows them to limit the number of useful rounds of interaction with the Hamiltonian.
In contrast, our bounds do not require any such noise.

Arguably the most relevant result to us is the lower bound of~\cite{bluhm2024hamiltonian}.
While they consider a different, locality testing problem, the proof of their Theorem 3.6 demonstrates that for a general $n$-qubit Hamiltonian $H$ with spectral norm at most $1$, $\Omega (2^{n / 2})$ rounds of interaction and $\Omega (2^{n / 2} / \eps)$ total time evolution is necessary for learning the target Hamiltonian to spectral norm error $\epsilon$, even with arbitrary measurements.
They also demonstrate stronger lower bounds for restricted classes of measurements.
However, their lower bound instance does not give meaningful lower bounds for the parameter estimation problem, as the Hamiltonians in their hard instance have Pauli coefficients which are exponentially small.
Additionally, when specialized for the $k$-local setting, their result only yields lower bounds which scale as $\Omega(2^{k / 2})$, and do not yield lower bounds of the form $n^{\Omega (k)}$.
In contrast, we demonstrate a lower bound which achieves the right qualitative scaling in both $n$ and $k$ simultaneously.
Notice that up to polynomial factors in $n$, and constants in the exponent, our lower bound for effective Hamiltonian learning almost directly generalizes their bound to the $k$-local setting, while matching their guarantees for general Hamiltonians.
This is because, as alluded to above, the effective Hamiltonian learning problem for time $t = \Theta (1)$ is strictly easier than the problem of learning $H$ to spectral norm error $\eps$.


\subsection{Outlook}

In this paper, we prove the first lower bounds for learning Hamiltonians from time evolution which scale polynomially with the number of non-zero terms in the Hamiltonian, both for general and for local Hamiltonians.
There are a number of natural next steps for this research direction.
One basic question is if we can obtain matching upper and lower bounds for learning arbitrary $k$-local Hamiltonians, or Hamiltonians with $M$ nonzero terms (we note that the upper bound of~\cite{ma2024learning} does not match our non-local lower bound, as their $\widetilde{O}$ hides terms which are exponential in $k$).
Another direction is if we can obtain instance-optimal bounds for the complexity of Hamiltonian learning for other sets of nonzero Pauli terms; for instance, when the Pauli terms are close to commuting, or for bosonic and/or fermionic Hamiltonians.
Some of the upper bounds, \textit{e.g.}~\cite{huang2023learning,bakshi2024structure} do naturally adapt to some extent to the structure of the interaction terms, but the rates they obtain are likely not tight in general, and we do not have lower bound instances for Hamiltonians with general sets of sparse interaction terms.
Another basic question is if one can establish similar lower bounds for learning from thermal states of Hamiltonians, matching the upper bounds of~\cite{bakshi2024learning}.
Finally, on the more technical side, a natural question is whether the bounds we established for Fourier-normalized absolute minima are tight for low-degree functions.

\section{Preliminaries}
\label{sec:prelim}

\paragraph{Notation.}
For any matrix $M$, and any $p \in \R$, we let $\norm{M}_p$ denote the Schatten-$p$ norm of $M$.
For any subspace $V$, we let $\Pi_V$ denote the projection operator onto this subspace.
For any Pauli operator $P \in \{I, X, Y, Z\}^{\otimes n}$, we denote its coordinates via $P = \otimes_{i = 1}^n P_i$, and for any vector of coefficients $\vec{\alpha} = \{\alpha_P\}_{P \in \{I, X, Y, Z\}^{\otimes n}}$, we let $M(\alpha) = \sum_P \alpha_P P$.
For any Pauli string $P$, we let $|P|$ denote the degree of $P$, \textit{i.e.}, the number of non-identity terms in $P$.
It will also be useful to us to index by subsets of Pauli matrices, \textit{i.e.} $\vec{\alpha} = \{\alpha_P\}_{P \in S}$ for some subset $S$ of the Pauli matrices, in which case we let $M(\vec{\alpha}) = \sum_{P \in S} \alpha_P P$, and abbreviate it as $M$ when the context is unambiguous.

A particularly important type of Pauli matrix that will arise often throughout this paper will be the Pauli $Z$ matrices, and it will be convenient to index these by their non-identity parts.
To this end, for any $S \subseteq [n]$, we let $Z_S = \left( \otimes_{i \in S} Z_i \right) \bigotimes \left( \otimes_{i \not\in S} I \right)$\; .
We will also let $X_S$ be defined analogously for the Pauli $X$ matrices and for any $S \subseteq [n]$.


For two real quantities $x$ and $y$, the notation $x\lesssim y$ (resp. $x\gtrsim y$) means that there exists a universal constant $C>0$ independent from all parameters such that $x\leq Cy$ (resp. $x\geq Cy$).
The notation $x\lesssim_p y$ means $C=C(p)$ may only on the parameter $p$, and so on.

\subsection{Access models}

Here we formally define the class of algorithms we will consider throughout this paper, and some related parameters.
Throughout the paper, we will let $n$ denote the number of qubits of the Hamiltonian, and $k$ will denote the locality of the Hamiltonian.
We first recall the definition of a general, informationally complete measurement operator:
\begin{definition}[Positive operator-valued measure (POVM)]
    We say a collection of $d \times d$ PSD matrices $\mathcal{M} = \{M_i\}_{i = 1}^p$ with associated matrices $K_i \in \C^{d} \times \C^d$ form a \emph{positive operator-valued measure} (POVM) if $M_i = K_i K_i^\dagger$ for all $i = 1, \ldots, p$, and
    $ \sum_{i = 1}^p M_i = I$.
    We call $\{K_i\}_{i = 1}^p$ the associated \emph{Kraus operators}.

    Given a pure state $\ket{\psi} \in \C^{d}$, the outcome of measuring $\rho$ with $\mathcal{M}$ is that we obtain $i$ with probability $\bra{\psi} M_i \ket{\psi}$, for $i = 1, \ldots, p$, and given that the measurement returns $i$, the post-measurement state becomes $\ket{\psi'}$, where
    \[
    \ket{\psi'} = \frac{K_i^\dagger \ket{\psi}}{\norm{K_i^\dagger \ket{\psi}}} \; .
    \]
\end{definition}
\noindent
Note that in this definition, we never forget about the measurement outcomes, which will be without loss of generality for our setting.

We now define the first of model computation we will consider in this paper, which is also the model considered in prior work of~\cite{bluhm2024hamiltonian,huang2023learning}:
\begin{definition}[The sequential experiments Hamiltonian access model]
\label{def:access}
    We say an algorithm $\mathcal{A}$ is a sequential $R$-experiment algorithm for learning from real-time evolution that uses $r$ ancillae if, given an unknown $n$-qubit, $k$-local Hamiltonian $H$, for each round $\ell = 1, \ldots, R$, it operates as follows:
    \begin{enumerate}
        \item Planning: the algorithm chooses...\begin{itemize}
        \item a number of rounds $m^{(\ell)} \in \mathbb{N}$.
        \item timesteps $t^{(\ell)}_j \in \R_{\geq 0}$, 
        \item initial states $\ket{\psi^{(\ell)}} \in \C^{2^{n + r}}$, 
        \item unitaries $U^{(\ell)}_j \in \C^{2^{n + r}} \times \C^{2^{n + r}}$, and
        \item a POVM $\mathcal{M}^{(\ell)}_1$ on $n + r$ qubits.
    \end{itemize}
    \item It then forms the state given by the following real-time evolution:
    \begin{equation}
    \label{eq:evolution}
    \ket{\phi^{(\ell)}} = \prod_{j = 1}^{m^{(\ell)}} \left( e^{-iH t_j} \otimes I \right) U_j \ket{\psi^{(\ell)}} \, ,    
    \end{equation}
    and measures $\ket{\phi^{(\ell)}}$ with $\mathcal{M^{(\ell)}}$ to obtain post-measurement state $\ket{\psi^{(\ell+1)}}$
    \end{enumerate}
We allow all of these choices to be adaptive based on the outputs of the measurements from previous rounds.
If the number of rounds $R = 1$, we say the algorithm is \emph{non-adaptive}.

For any such algorithm:
\begin{itemize}
\item we say that the \emph{maximum time evolution} of the algorithm, denoted $\tmax$ is the maximum $t^{(\ell)}_j$ across all possible random choices of $t^{(\ell)}_j$, across all $\ell, j$,
\item we say that the \emph{minimum time resolution} of the algorithm, denoted $\tmin$ is the minimum $t^{(\ell)}_j$ across all possible random choices of $t^{(\ell)}_j$, across all $\ell, j$,
\item 
we say that the \emph{total rounds of interaction} of the algorithm, denoted $m$, is the maximum of $\sum_{\ell = 1}^R m^{(\ell)}$ over all possible random choices of $m^{(\ell)}$, and
\item we say that the \emph{total time evolution} of the algorithm, denoted $T$, is the maximum of $\sum_{\ell, j} t^{(\ell)}_j$ over all possible random choices of $t^{(\ell)}_j$.
\end{itemize}
\end{definition}
\noindent
We note that our results will be essentially agnostic to the amount of additional quantum memory, so for conciseness, when clear from context, we will often omit the choice of $r$.
Similarly, most of our results will also hold for arbitrary amounts of adaptivity, so unless otherwise specified, the reader should assume we are considering arbitrarily adaptive algorithms.

While this class of algorithms captures existing state-of-the-art algorithms such as those in~\cite{huang2023learning,bakshi2024learning,hu2025ansatz}, and previous lower bounds such as those in~\cite{bluhm2024hamiltonian,huang2023learning} also are against these sorts of algorithms, we note that they in fact do not capture all physically realizable quantum evolutions.
In particular, they do not capture algorithms which use intermediate partial measurements.
Instead, we observe that such algorithms are naturally captured by a learning tree, which can be thought of as the Hamiltonian analog of the learning tree formalism for quantum state learning introduced in~\cite{aharonov2022quantum,bubeck2020entanglement}, although we note that in contrast to their models, this model captures \emph{all} quantum learning algorithms, and not a restricted class of them:
\begin{definition}[General real-time evolution Hamiltonian access]
\label{def:general-access}
    A general algorithm $\mathcal{A}$ for learning from real-time evolutions of a Hamiltonian that uses $r$ ancillae is specified by a tree $\mathcal{T}$, and some initial state $\ket{\psi} \in \C^{2^{n + r}}$.
    Each leaf node $v$ is indexed by POVM $\mathcal{M}_v= \{M^{(v)}\}_{i = 1}^{p_v}$ and classical outputs $o_1, \ldots, o_{p_v}$, associated to each measurement outcome, and every internal node $v$ is indexed by:
    \begin{itemize}
        \item An evolution time $t_v$,
        \item A unitary matrix $U_v: \C^{2^{n + r}} \times \C^{2^{n + r}}$,
        \item a POVM $\mathcal{M}_v = \{M^{(v)}_i\}_{i = 1}^{p_v}$ over $n + r$ qubits with associated Kraus operators $\{ K_i \}_{i = 1}^{p_v}$, and
        \item $p_v$ child nodes, with an edge to each child indexed by a measurement outcome.
    \end{itemize}
    Any $n$-qubit Hamiltonian $H$ induces a distribution on the leaves of the tree by running the algorithm as follows.
    At every node $v$, it maintains some $(n + r)$-qubit state $\ket{\psi_v}$, initially $\ket{\psi}$.
    At an internal node $v$, it forms the state
    \[
    \ket{\phi_v} = \left((e^{-iHt_v} \otimes I) \right) U_v \ket{\psi_v}\; ,
    \]
    then measures the resulting state with $\mathcal{M}_v$.
    If it receives measurement outcome $i$, it transitions to the $i$-th child node $v'$, and the state $\ket{\psi_{v'}}$ is the post-measurement state of $\ket{\phi_v}$ with this measurement outcome.
    When the algorithm reaches a leaf node, it measures $\ket{\psi_v}$ using the POVM at that node, and based on the measurement outcome, outputs the associated classical output.

    For any learning tree $\mathcal{T}$, and any Hamiltonian $H$, we let $\mathcal{T}_H$ denote the distribution over leaf nodes induced by these dynamics when run with $H$.
    We say the learning tree has $m$ rounds of interaction if the depth of the tree is $m$, we say that it has maximum time evolution $\tmax = \max_{v} t_v$ and minimum time resolution $\tmin = \min_v t_v$.
    For any leaf node $v$, we say the total time evolution for that node, denoted $\tau_v$, is the sum of the time evolutions on the root-to-leaf path, and we say the total time evolution of the algorithm is 
    \[
    T = \max_{v~\mathrm{leaf}} \tau_v \; .
    \]
\end{definition}
\noindent
It is straightforward to see that this generalizes all previous formal models for learning from real-time evolution of Hamiltonians, such as those considered in~\cite{bluhm2024hamiltonian,huang2023learning}.
Additionally, it allows for essentially all physically realizable quantum operations, including several which were not captured by prior models, such as partial measurement, and controlled application of the Hamiltonian.

\subsection{Formal problem statements}

Here, we formally define the Hamiltonian learning problems we will consider throughout this paper.

\paragraph{Full parameter recovery}
The first problem we consider is the question of learning all of the parameters of a local Hamiltonian to good uniform accuracy.

\begin{definition}[Full parameter recovery for $k$-local Hamiltonians from time evolution]
\label{def:full-param-recovery}
    In the \emph{full parameter recovery problem}, an algorithm is given time evolution access (as in~\Cref{def:access}) to an unknown $k$-local, $n$-qubit Hamiltonian $H = \sum_{|P| \leq k} \alpha_P P$ satisfying $|\alpha_P| \leq 1$ for all $P$, and an accuracy parameter $\eps$.
    We say that the algorithm \emph{succeeds} if it outputs estimates $\{\widehat{\alpha}_P\}_{|P| \leq k}$ satisfying 
    \[
    |\alpha_P - \widehat{\alpha}_P| \leq \eps
    \]
    for all $P$ with degree at most $k$, with probability at least $2/3$.
\end{definition}
\noindent
As mentioned above, we note that some bound on the size of the coefficients is both necessary and standard.
We also note that by standard boosting arguments, the constant $2/3$ is arbitrary and can be replaced with any constant less than $1$ without changing the asymptotic complexity of the problem.

\paragraph{Last parameter recovery}
We next consider the (seemingly) much simpler problem, of learning a single parameter of a local Hamiltonian, given a description of the rest of the Hamiltonian:
\begin{definition}[Last parameter recovery for $k$-local Hamiltonians from time evolution]
\label{def:last-param-recovery}
    In the \emph{last parameter recovery problem}, an algorithm is given a $k$-local, $n$-qubit Pauli matrix $P$, a full description of a $k$-local Hamiltonian $H = \sum_{|Q|\leq k, Q\neq P} \alpha_Q Q$ satisfying $|\alpha_Q| \leq 1$ for all $Q$, an upper bound parameter $B$, and an error parameter $\eps>0$.
    The algorithm is then given time evolution access (as in~\Cref{def:general-access}) to the $k$-local $n$-qubit Hamiltonian $M = H + \beta P$ for some $|\beta| \leq B$.
    We say that the algorithm \emph{succeeds} if it outputs an estimate $\widehat{\beta}$ so that $|\widehat{\beta} - \beta| \leq \eps$, with probability at least $2/3$.
\end{definition}
\noindent
Once again, the constant $2/3$ is arbitrary and can be replaced with any constant less than $1$.
We note that in this definition, we allow the coefficient on $P$ to be much larger than the rest of the coefficients.
Our lower bounds will hold if we take $B = 1$, but will still hold even if we allow $B$ and $\eps$ to be very large.

\paragraph{Effective Hamiltonian recovery}
Finally, we also consider a ``non-parametric'' notion of Hamiltonian recovery, where the goal is not necessarily to recover the Pauli coefficients themselves, but to simply recover a functionally equivalent description of the Hamiltonian.
In other words, we can hope to (approximately) reproduce the unitary channel induced by the Hamiltonian.
Formally:
\begin{definition}[Unitary evolution]
For any unitary $V\in \C^{2^n\times 2^n}$, let
$\mathcal{U}:\mathbb{C}^{2^n\times 2^n} \to \mathbb{C}^{2^n\times 2^n}$ denote the associated unitary channel of $V$, $\mathcal{U}(V):\rho \mapsto V \rho V^{\dagger}$.
\end{definition}
For any linear map on complex matrices $\Phi: \C^{2^n \times 2^n} \to \C^{2^n \times 2^n}$, we let $\norm{\Phi}_{\diamond}$ denote the diamond norm of the map:
\[
\norm{\Phi}_{\diamond} = \sup_{\norm{M}_1 \leq 1} \norm{(\Phi \otimes I) M}_1 \; .
\]
It is well-known (see \textit{e.g.}~\cite{watrous2018theory}) that the diamond distance between two quantum channels governs the extent to which it is possible to discriminate between the channels.

The problem we consider in this paper is:
\begin{definition}[Effective Hamiltonian recovery]
\label{def:effective}
    In the \emph{effective Hamiltonian recovery problem}, an algorithm is given time evolution access (as in~\Cref{def:access}) to an unknown $k$-local, $n$-qubit Hamiltonian $H = \sum_{|P| \leq k} \alpha_P P$ satisfying $\norm{H}_\infty \leq 1$, a time parameter $t > 0$, and an accuracy parameter $\epsilon > 0$.
    We say that the algorithm \emph{succeeds} if it outputs a description of a unitary matrix $U \in \C^{2^n \times 2^n}$ so that $\norm{\mathcal{U}(U) - \mathcal{U}(e^{-iHt})}_{\diamond} \leq \eps$ with probability at least $2/3$.
\end{definition}
\noindent
As before, the constant $2/3$ is more or less arbitrary.
We believe this is the weakest possible notion of recovering a functional approximation of the Hamiltonian, as it asks for an approximation of the dynamics induced by the Hamiltonian.
In particular, we note that this is weaker than asking for \textit{e.g.} a spectral approximation to $H$ as in~\cite{bluhm2024hamiltonian} (for constant $t$).
We also note that the assumption that $H$ is spectrally bounded is a stronger assumption than the parameter-wise bound we assumed before; despite this, we will still show that this problem exhibits a strong lower bound.

\subsection{Fourier analysis preliminaries}
We will be using standard concepts in Boolean Fourier analysis.
An interested reader can find a more detailed introduction in~\cite{o2014analysis}.
For any set $A$, and any function $f: A \to \C$, we will let $\E_{X \sim A} [f(X)]$ denote the expectation of $f$ under a uniformly random draw from $S$.
For any subset $S \subseteq [n]$, we will let $\chi_S (x): \{ \pm 1\}^n \to \R$ be the Fourier character $\chi_S(x) = \prod_{i \in S} x_i$, and for any function $f: \{\pm 1\}^n \to \C$ we will let $\wh{f}(S) = \iprod{f, \chi_S}$ denote its corresponding Fourier coefficient, where 
\[\iprod{f, g} = \E_{X \sim \{\pm 1 \}^n} \left[ f(x) \overline{g(x)} \right] = 2^{-n} \sum_{x \in \{\pm 1\}^n} f(x) \overline{g(x)}\] 
is the standard (normalized) inner product.
For any $\chi_S$, we say that it has degree $d$ if $|S| = d$.
We say $f$ is a degree-$d$ polynomial if all of the nonzero Fourier coefficients of $f$ have degree at most $d$, and we let $\calP_{n, d}$ denote the set of degree-$d$ polynomials over the hypercube.

\subsection{Metrics}
For two classical probability distributions $\mathcal D_1,\mathcal D_2$, we write $d_\tv(\mc D_1,\mc D_2)$ to denote the total variation distance between them:
\[
d_\tv (\mc D_1, \mc D_2) = \sup_{E} \Pr_{\mc D_1} [E] - \Pr_{\mc D_2} [E] \; .
\]
\noindent
We will also need the following metric between unitaries, which (up to constant factors) captures the diamond distance between their associated channels:
\begin{definition}
    For unitaries $U,V\in \mathbb{C}^{2^n\times 2^n}$ we define
    \[
        \dph(U,V) = \min_{z\in \mathbb{C}:|z|=1} \norm{zU - V}_\infty
    \]
\end{definition}
\noindent
For this, we have:
\begin{lemma}[Equivalence of diamond norm and phase operator norm, \cite{haah2023query}]
For unitaries $U,V\in \mathbb{C}^{2^n\times 2^n}$ we have
\[
    \frac12 \norm{\mathcal{U}(U)-\mathcal{U}(V)}_\diamond \le \dph(U,V) \le \norm{\mathcal{U}(U)-\mathcal{U}(V)}_\diamond,
\] 

    
\end{lemma}

\subsection{Matrix calculus}
We will need the following chain rule formula for matrix-valued exponentials, due to Duhamel:
\begin{lemma}[Duhamel's formula,~\cite{duhamel1860elements}]
\label{lem:derivative} For any matrices $A$ and $V$, we have that
    \[
        \frac{d}{dt} e^{A + tV} = \int_0^1 e^{(1-\tau)(A + tV)} V e^{\tau (A + tV)} \:d\tau \; .
    \]
\end{lemma}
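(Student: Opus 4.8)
The plan is to prove this by a variation-of-parameters (integrating factor) argument, which is the matrix analogue of solving a scalar linear inhomogeneous ODE; this is precisely the Duhamel principle alluded to in the statement. Fix $t$ and write $B = B(t) = A + tV$, so that $\partial_t B = V$. The quantity we want is $R := \frac{d}{dt} e^{A + tV}$, and the idea is to introduce the auxiliary family $R(\tau) := \partial_t\!\left(e^{\tau B(t)}\right)$ for $\tau \in [0,1]$, so that $R(0) = \partial_t I = 0$ while $R(1) = R$ is the derivative we are after.

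First I would record the two elementary facts that $e^{\tau B(t)}$ is jointly real-analytic in $(\tau, t)$, so that mixed partials commute, and that $\partial_\tau e^{\tau B} = B e^{\tau B} = e^{\tau B} B$. Differentiating the second identity in $t$ and applying the product rule (keeping factors in order, since $B$ and $V$ need not commute) yields the linear ODE in $\tau$
\[
    \partial_\tau R(\tau) \;=\; \partial_t\!\left(B e^{\tau B}\right) \;=\; V e^{\tau B} + B\,R(\tau), \qquad R(0) = 0 .
\]
Next I would solve this ODE by multiplying on the left by the integrating factor $e^{-\tau B}$: since $\partial_\tau e^{-\tau B} = -B e^{-\tau B} = -e^{-\tau B} B$ and $e^{-\tau B}$ commutes with $B$, the cross terms cancel and we get $\partial_\tau\!\left(e^{-\tau B} R(\tau)\right) = e^{-\tau B} V e^{\tau B}$. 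Integrating from $0$ to $1$ and using $R(0) = 0$ gives $e^{-B} R(1) = \int_0^1 e^{-\tau B} V e^{\tau B}\,d\tau$, and multiplying on the left by $e^{B}$ (and using $e^{B} e^{-\tau B} = e^{(1-\tau)B}$) produces exactly $\frac{d}{dt} e^{A+tV} = R(1) = \int_0^1 e^{(1-\tau)(A+tV)} V e^{\tau(A+tV)}\,d\tau$.

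A second, fully elementary route, which I would use as a sanity check, is to expand both sides in power series: by the Leibniz rule $\frac{d}{dt} e^{B} = \sum_{k \ge 1} \frac{1}{k!} \sum_{j=0}^{k-1} B^j V B^{k-1-j}$, while expanding the integrand and invoking the Beta integral $\int_0^1 (1-\tau)^a \tau^b\,d\tau = \frac{a!\,b!}{(a+b+1)!}$ gives $\int_0^1 e^{(1-\tau)B} V e^{\tau B}\,d\tau = \sum_{a,b \ge 0} \frac{B^a V B^b}{(a+b+1)!}$; the two expressions coincide after reindexing $k = a+b+1$, $j = a$.

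I expect the only real obstacle to be a matter of rigor rather than ideas: in either route one must justify differentiating the matrix-exponential series term by term in $\tau$ and in $t$, interchanging the order of the resulting doubly-indexed sums, and (in the second route) swapping the integral with the series. In finite dimension all of this is routine, because the series for $e^{\tau B(t)}$ and its termwise $\tau$- and $t$-derivatives converge absolutely and uniformly on compact sets in $(\tau, t)$; with that uniform convergence in hand, the manipulations above go through verbatim, and the non-commutativity of $B$ and $V$ is handled automatically by never reordering factors within a term.
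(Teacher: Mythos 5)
Your proof is correct. Note that the paper does not prove this lemma at all --- it is quoted directly from Wilcox's 1967 paper --- so there is no in-paper argument to compare against; your integrating-factor derivation (set $B=A+tV$, $R(\tau)=\partial_t e^{\tau B}$, derive $\partial_\tau R = V e^{\tau B} + B R$ with $R(0)=0$, and solve with $e^{-\tau B}$) is in fact essentially Wilcox's original argument, and both the use of mixed partials, justified by joint analyticity in $(\tau,t)$, and the cancellation in the integrating-factor step are handled correctly. The power-series sanity check via the Beta integral $\int_0^1(1-\tau)^a\tau^b\,d\tau = \frac{a!\,b!}{(a+b+1)!}$ and the reindexing $k=a+b+1$, $j=a$ is also valid, and in finite dimensions the termwise differentiation and interchange of sum and integral are justified exactly as you say, by absolute and locally uniform convergence. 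Either route would serve as a complete, self-contained proof of the lemma.
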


\section{Reducing hardness of learning to large Fourier-normalized absolute minima}


In this section we reduce learning lower bounds to the existence of polynomials with large Fourier-normalized absolute minima over the hypercube.

The main technical result of this section is the following reduction, which demonstrates that the existence of polynomials with large Fourier-normalized absolute minima immediately implies a lower bound against the last parameter recovery problem:
\begin{theorem}
\label{thm:reduction}
    Let $d \leq n -1$, let $\eta> 0$.
    Then, any algorithm for the last parameter recovery problem of $(d + 1)$-local Hamiltonians over $n$ qubits from time evolution (see \Cref{def:last-param-recovery}) with upper bound $B \geq \eta$ and error parameter $\eps = \eta/2$ with maximum time evolution $\tmax$ and $m$ total rounds of interaction requires
    \[
    \max(1,\eta \cdot \tau_{\max}) \cdot m \ge \frac{\max_{f \in \calP_{n - 1, d}} \gamma (f)}{\eta} \; .
    \]

\end{theorem}


\noindent
In Section~\ref{sec:large-absolute-minima} we give three constructions of functions with the following properties:

\begin{theorem}
\label{thm:function-constructions}
We have:
\begin{enumerate}[(i)]
    \item For all $\delta > 0$, there is a multilinear function $f: \{ \pm 1\}^n \to \R$ satisfying $\gamma (f) \geq 2^{(1/2 - \delta) n}$, for $n$ sufficiently large.\label{thm:f-general}
    \item For all $d \geq 2$, there is a degree-$d$ polynomial $f$ satisfying $\gamma (f) \geq \Omega (\sqrt{n / d})^{\lfloor d / 2\rfloor}$.\label{thm:f-low-degree-simple}
    \item For all $d \leq O(n^{1/3})$, there is a degree-$d$ polynomial $f$ satisfying
    \[
    \gamma (f) = \Omega \left(  \frac{1}{d^{5}} \cdot \left( \frac{n}{d} \right)^{\tfrac{d - 1}{2}} \right) \; .
    \]
    \label{thm:f-low-degree-hard}
\end{enumerate}
\end{theorem}
\noindent By combining these two results, we obtain our main lower bounds for the last parameter recovery problem:
\begin{corollary}
\label{cor:last-param-general}
    Let $\eta > 0$, and let $\delta > 0$.
    Then, for all $n$ sufficiently large, any algorithm for the last parameter recovery problem for $n$-qubit Hamiltonians from time evolution with upper bound $B \geq \eta$ and error parameter $\eta / 2$ requires 
    \[
    \max(1, \eta \cdot \tmax) \cdot m = \Omega (2^{(1/2 - \delta) n} / \eta)
    \]
\end{corollary}
\begin{corollary}
\label{cor:last-param-local}
    Let $\eta > 0$.
    Any algorithm for the last parameter recovery problem for $k$-local $n$-qubit Hamiltonians from time evolution with upper bound $B \geq \eta$ and error parameter $\eta / 2$ requires $\max (1, \eta \cdot \tmax) \cdot m = \Omega (g(k) / \eta)$, where
    \[
    g(k) = \begin{cases} \displaystyle 
  \frac{1}{k^5}\left( \frac{n}{k} \right)^{\frac{k-2}{2}} & \text{if } k \leq O(n^{1/3}), \\[1.1em]
 \displaystyle \left( \sqrt{\frac{n}{2k}} \right)^{\lfloor\frac{k-1}{2}\rfloor} & \text{otherwise}\,.
\end{cases}
    \]
\end{corollary}
\noindent
We pause here to make several comments about our results.
At a high level, our results show that either the number of queries to the Hamiltonian, or the maximum time resolution (and thus the evolution time) of the algorithm must scale polynomially in the total number of parameters in the system, just to learn a single parameter of the Hamiltonian, even if these other parameters are known to the algorithm.
In the setting where the minimum time resolution is not too small, for instance, when $\tmin = (n / k)^{-o(k)}$, our results imply that any algorithm must invariably pay total evolution time which is $(n / k)^{\Omega (k)}$.
This is a very realistic assumption in practice, since existing (and near-term) systems cannot achieve such precise control over the time resolution of the applied dynamics, see \textit{e.g.}~\cite{stilck2024efficient,bakshi2024structure,dutkiewicz2024advantage}.

Second, note that our results do not require any assumptions on the number of ancilla qubits or rounds of adaptivity.
Thus, our lower bounds hold against the strongest possible form of algorithm for learning from time evolution.

Finally, note that our result holds for all $\eta$.
There are two particularly interesting regimes: first, when $\eta$ is small, our result matches the Heisenberg-limited scaling of the known upper bounds of~\cite{huang2023learning,bakshi2024learning}.
Second, when $\eta$ is large---and indeed, we can take $\eta = (n/k)^{O(k)}$---our result demonstrates that even detecting the presence of an \emph{exponentially large} spike cannot be done efficiently.

\subsection{Proof of Theorem~\ref{thm:reduction}}

In this section, we describe the reduction that yields Theorem~\ref{thm:reduction}.
To this end, let $f: \{\pm 1\}^{n - 1} \to \R$ be a degree-$d$ polynomial which achieves the maximal Fourier-normalized minimum $\gamma(f)$.
Without loss of generality, by scaling, we can assume that $\max_{S} |\widehat{f} (S)| = 1$, so that $|f(x)| \geq \gamma (f)$ for all $x \in \{\pm 1\}^{n - 1}$.

Our construction will be as follows.
Let the ``spiked'' Pauli coefficient be $\Delta$, where 
\begin{equation}
\Delta = I_1 \otimes \cdots I_{n-1} \otimes X = \begin{pmatrix}
  0& 1 & & &\\
  1& 0& & & \\
  & & 0 & 1 & & \\
  & & 1 & 0 & & \\
  & & & &\ddots& & \\
  & & & & & 0&1\\
  & & & & &1&0
\end{pmatrix} \; .
\end{equation}
Note that $\Delta$ is a single, 1-local Pauli.

Now, given $f$ as above, let $g: \{\pm 1\}^n \to \R$ be defined by $g(x) = f(x_{-n}) (1 + x_n)$, where for any $x \in \{\pm 1\}^n$, we define $x_{-n} \in \{\pm 1\}^{n - 1}$ to be the $n - 1$ length vector consisting of the first $n - 1$ coordinates of $x$.
Note that for any set $S \subseteq [n]$, if $n \not\in S$, we have that
\begin{align*}
    \wh{g}(S) &= \E_{x \sim \{ \pm 1\}^n } \left[ f(x_{-n}) (1 + x_n) \chi_S (x) \right] = \E_{x \sim \{ \pm 1\}^n } \left[ f(x_{-n}) \chi_S (x) \right] = \widehat{f}(S) \; ,
\end{align*}
and on the other hand, if $n \in S$, we have
\begin{align*}
    \wh{g}(S) &= \E_{x \sim \{ \pm 1\}^n } \left[ f(x_{-n}) (1 + x_n) \chi_S (x) \right] = \E_{x \sim \{ \pm 1\}^n } \left[ f(x_{-n}) \chi_{S \setminus \{ n\}} (x) \right] = \widehat{f}(S \setminus \{ n\}) \; ,
\end{align*}
so in either case, we have that $|\wh{g} (S)| \leq 1$.


We now define our $\vec{\alpha}$.
In fact, it will be supported only on the Pauli $Z$ matrices (\textit{i.e.} it is diagonal), and for any $Z_S$, we let $\alpha_{Z_S} = \wh{g} (S)$.
By the above, we have that $\norm{\vec{\alpha}}_\infty \leq 1$ for all $P$.
Note that by definition, for any computational basis vector $\ket{x}\in \{\pm 1\}^n$ we have that 
\[
\bra{x} M(\vec{\alpha}) \ket{x} = \sum_{S} \wh{g}(S) \bra{x} Z_S \ket{x} = \sum_{S} \wh{g} (S) \chi_S (x) = g(x) \; .
\]
With this, we can now state the main bound we will require:

\begin{theorem}\label{thm:worst-hard-instance}
    Let $\eta > 0$.
    Then, for all $t \geq 0$, we have that 
    \[
    \left\| e^{-iM(\vec{\alpha})t} - e^{-i(M(\vec{\alpha})+ \eta \Delta)t}\right\|_\infty \le O \left( \frac{\max(\eta,\eta^2 t)}{\gamma (f)} \right) \; .
    \]
\end{theorem}
\noindent
We first show~\Cref{thm:reduction} given~\Cref{thm:worst-hard-instance}.



\begin{proof}[Proof of~\Cref{thm:reduction} given~\Cref{thm:worst-hard-instance}]

    For simplicity throughout this proof, we abbreviate $M(\alpha)$ by $M$.
    Let $\mathcal{T}$ denote a learning tree with parameters $\eta$ and $\tmax$ as in Theorem~\ref{thm:reduction}, and let $h = \max(\eta,\eta^2 \tau_{\max})$.
    %
    %
    We will show the induced distribution over leaves of $\mathcal{T}$ under $M$ and $M+\eta \Delta$ differs by at most $O(mh/\gamma(f))$ in total variation distance.
    This immediately implies that $\mathcal{T}$ cannot distinguish between $M$ and $M + \eta \Delta$ except with probability at most $O(mh/\gamma(f))$, which implies that any successful algorithm for last parameter recovery, and hence for distinguishing these two Hamiltonians requires $m\cdot h \ge \Omega(\gamma(f))$, as claimed.
    
    We prove this by a hybrid argument. 
    Let $\mathcal{D}_k$ be the induced distribution over leaf nodes of the tree $\mathcal{T}$ given by the evolution where for a node $v$ with depth $< k$, we apply time evolution with Hamiltonian $M + \eta \Delta$:
    \[
    \ket{\phi_v} = \left((e^{-i(M+\eta \Delta)t_v} \otimes I) \right) U_v \ket{\psi_v}\; ,
    \]
    and for node $v$ with depth $\ge k$, we apply time evolution with Hamiltonian $M$:
    \[
    \ket{\phi_v} = \left((e^{-iMt_v} \otimes I) \right) U_v \ket{\psi_v}\;.
    \]

    Now, let $v$ be any depth-$k$ node, and let $\mathcal{D}_{k, v}$ and $\mathcal{D}_{k + 1, v}$ denote the conditional distributions of $\mathcal{D}_k$ and $\mathcal{D}_{k+1}$ conditioned on the event that they arrive at this node.
    The dynamics of $\mathcal{D}_k$ and $\mathcal{D}_{k + 1}$ are identical up to $v$, and so the associated state $\ket{\psi_v}$ is the same for both distributions, and the probability we reach that state, which we denote $p_v$, is also the same for both distributions.


    There is a difference of the evolution of the state $\ket{\psi_v}$: the evolved states of $\mathcal{D}_{k,v}$ and $\mathcal{D}_{k+1,v}$ are
     $\ket{\phi_v}$ and $\ket{\phi_v'}$ respectively, where \[
    \ket{\phi_v} = \left((e^{-iMt_v} \otimes I) \right) U_v \ket{\psi_v},\; \ket{\phi_v'} = \left((e^{-i(M+\eta \Delta)t_v} \otimes I) \right) U_v \ket{\psi_v}\; .
    \] 

    Then the dynamics of $\mathcal{D}_k$ and $\mathcal{D}_{k + 1}$ are identical again after this evolution of the node $v$, \textit{i.e.}, they are running the same POVM for states $\ket{\phi_v}$ and $\ket{\phi_v'}$, and $\mathcal{D}_{k, v}$ and $\mathcal{D}_{k + 1, v}$ are exactly the associated distributions of measurement outcomes.

    By \Cref{thm:worst-hard-instance}, \[\norm{\ket{\phi_v} - \ket{\phi_v'} }_2 \le O \left( \frac{\max(\eta,\eta^2 t_v)}{\gamma (f)} \right) \le O(h/\gamma(f)),\] which implies that $\norm{\ketbra{\phi_v}{\phi_v} -\ketbra{\phi_v'}{\phi_v'}}_1 \le O(h/\gamma(f))$. 
    Hence the total variation distance between $\mathcal{D}_{k,v}, \mathcal{D}_{k+1,v}$ can be upper bounded by $d_\tv (\mathcal{D}_{k,v}, \mathcal{D}_{k+1,v}) \le O(h/\gamma(f))$, for any node $v$ at depth $k$. 

    Note that $\mathcal{D}_k = \sum_{v:\mathrm{at~depth}~k} p_v\cdot \mathcal{D}_{k,v}$ , and $ \mathcal{D}_{k+1} = \sum_{v:\mathrm{at~depth}~k} p_v\cdot \mathcal{D}_{k+1,v}$, and so we have
    \[
        d_\tv (\mathcal{D}_k, \mathcal{D}_{k+1}) \le \sum_{v:\mathrm{at~depth}~k} p_v \cdot d_\tv(\mathcal{D}_{k,v}, \mathcal{D}_{k+1,v}) \le O(h/\gamma(f)). 
    \]

    Notice that induced distribution over leafs of $\mathcal{T}$ under $M$ and $M+\eta \Delta$ are exactly $\mathcal{D}_0$ and $\mathcal{D}_m$, so by triangle inequality, we have
    \[
        d_{\tv}(\mathcal{D}_0, \mathcal{D}_{m}) \le \sum_{k=0}^{m-1} d_{\tv}(\mathcal{D}_k, \mathcal{D}_{k+1}) \le O(mh/\gamma(f)),
    \] as we desired.%
    %
%
\end{proof}

    
\noindent
The rest of the section is dedicated to the proof of Theorem~\ref{thm:worst-hard-instance}.

\begin{proof}[Proof of Theorem~\ref{thm:worst-hard-instance}]
Note that $M$ and $M(\vec{\alpha}) + \eta \Delta$ are both block diagonal with $2 \times 2$ blocks, where each block is supported on the vectors $\ket{x \oplus 1}, \ket{x \oplus (-1)}$ for all $x \in \{\pm 1\}^{n - 1}$.
For any $x \in \{ \pm 1\}^{n - 1}$, let $M_x$ and $\Delta_x$ denote the $2 \times 2$ matrices that are $M(\vec{\alpha})$ and $\Delta$ restricted to these indices, respectively.
By the shared block diagonal structure, we have that
\[
\norm{e^{-it M(\vec{\alpha})} - e^{- i t (M(\vec{\alpha}) + \eta \Delta)}}_\infty \leq \max_{x \in \{ \pm 1\}^{n - 1}} \left\| e^{-it M_x} - e^{- it(M_x+\eta \Delta_x)}\right\|_\infty \; .
\]
For any such $x$, we have that $M_x$ is a diagonal matrix with entries $2 f(x)$ and $0$ by recalling the definition of $M$ and function $g$, and $\eta \Delta_x$ is a matrix with zero diagonal entries and spectral norm at most $\eta$.
The theorem then follows from Lemma~\ref{lem:matrix-bound} (by taking $A$, $\Delta$, $C$ and $D$ to be $-Mt$, $-\eta t \Delta$, $\eta t$, and $\gamma(f)t$ respectively), proved below.
\end{proof}

\begin{lemma}
\label{lem:matrix-bound}
    Let $A \in \C^{\ell \times \ell}$ be a diagonal matrix, and let $\Delta \in \mathbb{C}^{\ell \times \ell}$ be Hermitian. 
    Let $D$ be any positive constant.
    Suppose that $M$ satisfies $|A_{jj} - A_{kk}| \geq D$ for all $j \neq k$, $\Delta_{jj} = 0$ for all $j$, and $\opnorm{\Delta} \leq C$.
    Then, we have that
    \[
    \opnorm{ e^{iA} - e^{i(A+\Delta)}} \le O\left(\min \left( \frac{\max(C,C^2) \ell}{D},C,1\right) \right) \; .
    \]
\end{lemma}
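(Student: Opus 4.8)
The plan is to treat the three arguments of the minimum separately; the only real content is the bound $O(Cn/D)$, which is the useful one when the gap $D$ is large. The bounds $\opnorm{e^{iM}-e^{i(M+\Delta)}} \le O(1)$ and $\le O(C)$ are essentially free. Since $iM$ and $i(M+\Delta)$ are anti-Hermitian, both $e^{iM}$ and $e^{i(M+\Delta)}$ are unitary, so their difference has operator norm at most $2$; and applying \Cref{lem:derivative} with $A = iM$, $V = i\Delta$ and integrating $\frac{d}{dt}e^{iM+it\Delta}$ over $t\in[0,1]$, every factor $e^{(1-\tau)(iM+it\Delta)}$ and $e^{\tau(iM+it\Delta)}$ that appears is unitary, so the integrand has norm $\le \opnorm{\Delta}\le C$, giving $\opnorm{e^{iM}-e^{i(M+\Delta)}}\le C$.

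For the $O(Cn/D)$ bound I would write $H = M+\Delta$ and start from the first-order Duhamel identity $e^{iH}-e^{iM} = i\int_0^1 e^{i(1-\tau)M}\,\Delta\, e^{i\tau H}\,d\tau$. Replacing $e^{i\tau H}$ by $e^{i\tau M} + (e^{i\tau H}-e^{i\tau M})$ splits this as $T+R$, with $T = i\int_0^1 e^{i(1-\tau)M}\Delta e^{i\tau M}\,d\tau$ and $R = i\int_0^1 e^{i(1-\tau)M}\Delta(e^{i\tau H}-e^{i\tau M})\,d\tau$. In $T$ both exponentials are diagonal, so its $(j,k)$ entry equals $i\,\Delta_{jk}\,e^{iM_{jj}}\int_0^1 e^{i\tau(M_{kk}-M_{jj})}\,d\tau = \Delta_{jk}\,e^{iM_{jj}}\cdot\frac{e^{i(M_{kk}-M_{jj})}-1}{M_{kk}-M_{jj}}$ for $j\ne k$, and $0$ for $j=k$ since $\Delta_{jj}=0$; this is exactly where the gap enters, as each such entry has modulus at most $2|\Delta_{jk}|/D$. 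Hence $\opnorm{T}\le \norm{T}_2 \le n\max_{j,k}|T_{jk}| \le \frac{2n}{D}\max_{j,k}|\Delta_{jk}| \le \frac{2Cn}{D}$ (and in fact $\le \frac{2C\sqrt n}{D}$, using $\norm{T}_2\le\tfrac2D\norm{\Delta}_2\le \tfrac{2\sqrt n}{D}\opnorm{\Delta}$). So $T$ is the main term and already has the right size.

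The hard part will be controlling the remainder $R$. The crude bound $\opnorm{R}\le C\int_0^1 \opnorm{e^{i\tau H}-e^{i\tau M}}\,d\tau$, fed the same decomposition at time $\tau$, gives a Gr\"onwall inequality $\phi(\tau)\le \tfrac{2Cn}{D} + C\int_0^\tau \phi(s)\,ds$ for $\phi(\tau):=\opnorm{e^{i\tau H}-e^{i\tau M}}$, hence $\phi(1)\le \tfrac{2Cn}{D}e^{C}$; combined with the trivial bounds $\phi(1)\le\min(2,C)$ this already proves the lemma whenever $C = O(1)$, and more generally whenever $D = O(Cn)$ (where $\min(Cn/D,C,1)$ is itself $\Theta(\min(C,1))$). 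To cover the remaining regime I would instead argue through resolvents: write the spectral projections $Q_j$ of $H$ as contour integrals $\tfrac{1}{2\pi i}\oint (zI-H)^{-1}\,dz$ over circles of radius $D/2$ about the eigenvalues $M_{jj}$ of $M$; the gap gives $\opnorm{(zI-M)^{-1}}\le 2/D$ on these circles, and a Neumann expansion (valid since $C<D/2$ in this regime) shows $\opnorm{Q_j - e_je_j^{*}} = O(C/D)$, while the first-order eigenvalue correction $e_j^{*}\Delta e_j = \Delta_{jj}$ vanishes, so the eigenvalues of $H$ move by only $O(C^2/D)$. Plugging into $e^{iH}-e^{iM} = \sum_j(e^{i\lambda_j}-e^{iM_{jj}})Q_j + \sum_j e^{iM_{jj}}(Q_j-e_je_j^{*})$ and summing over the $n$ projections recovers an $O(Cn/D)$-type estimate, the eigenvalue-shift contribution being $O(C^2/D)$ and hence subsumed once $C = O(n)$. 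I expect this second-order perturbation analysis---in particular pinning down that the only obstruction to a clean linear-in-$C$ bound is the genuinely second-order eigenvalue shift---to be the most delicate step of the argument.
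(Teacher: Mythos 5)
Your trivial bounds and your Duhamel-plus-Gr\"onwall argument are correct, and together they already prove the stated bound whenever $C=O(1)$ or $D=O(Cn)$. This is a genuinely different route from the paper's: the paper applies \Cref{lem:derivative} and bounds the matrix elements of the integrand $e^{i(1-\tau)(M+t\Delta)}\Delta e^{i\tau(M+t\Delta)}$ in the eigenbasis of the \emph{perturbed} matrix $M+t\Delta$, using the oscillatory $\tau$-integral for off-diagonal entries and first-order eigenvector perturbation together with $\Delta_{jj}=0$ for the diagonal ones. Your frozen-propagator main term $T$ plays the role of the paper's off-diagonal oscillatory integral (and in fact gives the sharper $C\sqrt n/D$ via the Frobenius norm), while your Gr\"onwall step replaces the paper's eigenvector-perturbation control; both arguments cover essentially the same regime.

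The genuine gap is the remaining regime you defer to the resolvent sketch ($C\gg 1$, $D\gg Cn$): the residual $O(C^2/D)$ eigenvalue-shift term you isolate there cannot be removed when $C\gg n$, and in fact the lemma's stated bound fails in that corner. Take $n=2$, $M=\mathrm{diag}(0,D)$, $\Delta=\begin{pmatrix}0&C\\C&0\end{pmatrix}$ with $D=C^2$ and $C$ large: the eigenvalue near $0$ shifts to about $-C^2/D=-1$ while the eigenvectors rotate only by $O(C/D)=O(1/C)$, so $\opnorm{e^{iM}-e^{i(M+\Delta)}}\ge |1-e^{-i}|-O(1/C)=\Omega(1)$, whereas $\min(Cn/D,C,1)=2/C=o(1)$. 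So your caveat ``subsumed once $C=O(n)$'' is precisely the hypothesis that is missing, and no completion of the sketch can avoid it; the honest fix is to add a $C^2/D$ (eigenvalue-shift) term to the conclusion or to assume $C=O(1)$. It is worth noting that the paper's own proof quietly lives in the same restricted regime: the step ``$|\lambda_j(t)-M_{jj}|\le 1$ by Cauchy interlacing'' is really $\le C$, and the final diagonal estimate $\bra{u_j(t)}\Delta\ket{u_j(t)}$ is genuinely $O(C^2/D)$ rather than $O(C/D)$, so it too requires $C=O(1)$. In short: your proof is correct and complete exactly where the lemma is correct, and the step you flag as delicate is a defect of the statement rather than of your argument.
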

\begin{proof}
By the triangle inequality, the left hand side is bounded by $2$.
Thus, by Lemma~\ref{lem:derivative}, and another application of the triangle inequality, it suffices to show that 
\begin{equation}
\label{eq:derivative1}
   \opnorm{\int_0^1 e^{i(1-\tau)(A+t\Delta)} \Delta e^{i\tau (A+t\Delta)} \:d\tau} \leq O\left(\min \left( \frac{\max(C,C^2) \ell}{D},C\right) \right) \; ,
\end{equation}
for all $t \in [0, 1]$. 
Let $B_t (\tau)$ denote the integrand on the LHS of~\Cref{eq:derivative1}.
$B_{t} (\tau)$ is clearly spectrally upper bounded by $C$ for all $t, \tau$, and so the overall integral is also clearly at most $C$, for all $D$.
Thus, it remains to consider the setting where $D \geq c' C \ell$ for some constant $c'$ sufficiently large. 
In this setting, by the triangle inequality, it suffices to demonstrate that, for any fixed $t \in [0, 1]$, the LHS of~\Cref{eq:derivative1} is at most $O(C \ell/D)$.

To that end, for any $t \in [0, 1]$, let $\sum_{j = 1}^n \lambda_j (t) \ketbra{u_j(t)}{u_j(t)}$ denote the spectral decomposition of $A + t \Delta$, where $\lambda_1 (t) \geq \lambda_2(t) \geq \ldots \geq \lambda_n (t)$.
Note that $u_j (0) = \ket{j}$ and that $\lambda_j (0) = A_{jj}$.
By Weyl's inequality, observe that $\left| \lambda_j (t) - A_{jj} \right| \leq C$ for all $j$, and hence $\left| \lambda_j (t) - \lambda_k (t) \right| \geq \Omega (D)$, for all $j \neq k$.

We will show that for all $j, k = 1, \ldots, \ell$,
\begin{equation}
    \label{eq:derivative-entrywisebound}
    \left| \int_{0}^1 \bra{u_j(t)} B_t (\tau) \ket{u_k(t)} d \tau \right| \leq O \left( \frac{\max(C,C^2)}{D} \right) \; .
\end{equation}
Since the $u_j$ form an orthonormal basis, this immediately implies~\Cref{eq:derivative1}.

First, consider the case where $j \neq k$.
In this case, we have that 
\begin{align*}
    \left| \int_0^1 \bra{u_j(t)} B_t (\tau) \ket{u_k(t)} d \tau \right| &= \left| \bra{u_j (t)} \Delta \ket{u_k (t)} \right| \cdot \left| \int_0^1 e^{i((1-\tau)\lambda_k (t) + \tau\lambda_j (t))} d \tau \right| \\
    &= \left| \bra{u_j (t)} \Delta \ket{u_k (t)} \right| \cdot \left| \frac{e^{i\lambda_j (t)} - e^{i\lambda_k(t)}} {\lambda_j(t) - \lambda_k(t)} \right| \\
    &\leq O\left( \frac{C}{|\lambda_j(t) - \lambda_k(t)|} \right) \leq O \left( \frac{C}{D} \right) \; ,
\end{align*}
as claimed, where the third line follows since $\opnorm{\Delta} \leq C$.

We now consider the case where $j = k$.
Here, we have that
\begin{align*}
\left| \int_0^1 \bra{u_j(t)} B_t (\tau) \ket{u_k(t)} d \tau \right| &= \left| \int_0^1 \bra{u_j (t)} e^{-i (1 - \tau) (A + t \Delta)} \Delta e^{-i \tau (A + t \Delta)} \ket{u_j (t)} dt \right| \\
&\le
 \int_0^1 \left| \bra{u_j (t)} \Delta  \ket{u_j (t)} \right| dt  \\
&=
\left| \bra{u_j (t)} \Delta \ket{u_j (t)} \right|\; ,
\end{align*}
since $u_j$ is an eigenvector of $A + t \Delta$ by definition.

Write $u_j (t) = c_j \ket{j} + \beta_j \ket{\epsilon_j}$, where $\epsilon_j$ is the component of $u_j (t)$ orthogonal to $\ket{j}$.
We claim that $|\beta_j|^2 \leq O(C^2/D^2)$.
To see this, let $\beta_j = \sum_{k \neq j} c_k \ket{k}$.
Because $u_j (t)$ is a eigenvector of $A + t \Delta$ with eigenvalue $\lambda_j (t)$, by rearranging, we obtain that
\begin{align*}
    \Delta \ket{u_j (t)} = \frac{1}{t} \cdot \left( (\lambda_j (t) - A_{jj}) c_j \ket{j} + \sum_{k \neq j} (\lambda_j (t) - A_{kk}) c_k \ket{k} \right) \; .
\end{align*}
The LHS of this expression has norm at most $C$ since $\Delta$ has spectral norm at most $C$, and hence
\begin{align*}
    C^2 &\geq \frac{1}{t^2} \cdot \left( \sum_{k \neq j} (\lambda_j (t) - A_{kk})^2 |c_k|^2 \right) \\
    &\geq \sum_{k \neq j} \Omega (D)^2 |c_k|^2 \; ,
\end{align*}
from which we derive that $\sum_{j \neq k} |c_k|^2 \leq O(C^2 / D^2)$, which is equivalent to the desired statement.
Since $\bra{j} \Delta \ket{j} = \Delta_{jj} = 0$, we have that
\begin{align*}
     \left| \bra{u_j (t)} \Delta \ket{u_j (t)} \right| &= \left| \left(c_j \bra{j} + \beta_j \bra{\epsilon_j} \right) \Delta \left(c_j \ket{j} + \beta_j \ket{\epsilon_j} \right) \right| \\
     &\leq 2 |\beta_j c_j \bra{\epsilon_j} \Delta \ket{j}| + \beta_j^2 |\bra{\epsilon_j} \Delta \ket{\epsilon_j}| \\
     &= O \left( \frac{C^2}{D} \right) + O \left( \frac{C^3}{D^2} \right) = O \left( \frac{C^2}{D} \right) \; ,
\end{align*}
as claimed.
\end{proof}

\section{Polynomials with large absolute minima}
\label{sec:large-absolute-minima}

In this section we exhibit classes of polynomials $f:\{\pm 1\}^n\to \R$ with large Fourier-normalized absolute minima.
One may imagine this task as maximizing the absolute minimum of $f$ given the requirement that the coefficients of $f$ remain in $[-1,1]$.

We begin by presenting two straightforward constructions: a probabilistic argument for general multilinear $f$, and an explicit construction for degree-$d$ $f$ which works for all pairs $(d,n)$.
However, in the important regime of $\deg(f)\leq O(n^{1/3})$ (including for example constant-degree $f$), it turns out this latter construction is suboptimal; $\gamma(f)$ of such loses significantly in the exponent on $n$ here.
In \Cref{sec:krawtchouk} we derive stronger lower bounds in the regime of $\deg(f)\leq O(n^{1/3})$.

First, a strong lower bound on $\gamma(f)$ for general $f$:
\begin{lemma}
\label{lem:boolean-instance}
    For all constant $\delta > 0$, and for all $n \geq \Omega (1 / \delta)$, there is a function $f: \{\pm 1\}^n \to \R$ such that $|\wh{f}(S)| \leq 1$ for all $S$, and $|f(x)| \geq 2^{(1/2-\delta)n}$ for all $x \in \{\pm 1\}^n$. 
\end{lemma}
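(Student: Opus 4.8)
The plan is to sidestep any delicate concentration estimate and instead build $f$ deterministically as a tensor power of a tiny gadget, exploiting the fact that Fourier coefficients multiply under tensor products: if $f(x,y) = p(x)\,q(y)$ for $p:\{\pm1\}^a\to\R$ and $q$ a function on the remaining $b$ coordinates, then $\hat f(S\sqcup T)=\hat p(S)\,\hat q(T)$ for every $S\subseteq[a]$ and every subset $T$ of the last $b$ coordinates, and likewise $|f(x,y)|=|p(x)|\cdot|q(y)|$ pointwise. So it suffices to exhibit a single gadget on a constant number of bits whose pointwise magnitude is large relative to its Fourier $\ell_\infty$ norm, and then tensor together $\Theta(n)$ independent copies of it.

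For the gadget I would take the two-bit function $g(x_1,x_2) = 1 + x_1 + x_2 - x_1 x_2$. A one-line check shows that $g$ equals $+2$ on three of the four points of $\{\pm1\}^2$ and $-2$ on the fourth, so $|g(x)|=2$ everywhere, while its Fourier coefficients are $\hat g(\emptyset)=\hat g(\{1\})=\hat g(\{2\})=1$ and $\hat g(\{1,2\})=-1$, each of magnitude exactly $1$; this is nothing but a ``flat'' sign pattern drawn from the $4\times4$ Hadamard matrix.

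Assembling: for even $n$ set $f = g^{\otimes n/2}$, i.e.\ $f(x)=\prod_{j=1}^{n/2} g(x_{2j-1},x_{2j})$. Multiplicativity of the Fourier transform gives $|\hat f(S)|=1\le 1$ for every $S$, while $|f(x)|=2^{n/2}$ for every $x$, which exceeds $2^{(1/2-\delta)n}$ since $\delta>0$. For odd $n$, tensor in one extra coordinate carrying the constant function $1$; then $|f(x)|=2^{(n-1)/2}\ge 2^{(1/2-\delta)n}$ as soon as $n\ge 1/(2\delta)$, and the Fourier $\ell_\infty$ norm is unchanged. This proves the lemma, and in fact gives the stronger conclusion that $|f(x)|$ is within a constant factor of $2^{n/2}$ for every $x$.

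I do not expect a real obstacle along this route: the only idea is noticing that a flat Hadamard pattern on $O(1)$ bits already has unit Fourier coefficients together with constant pointwise magnitude, which then amplifies multiplicatively under tensoring. It is worth recording why the naive probabilistic attempt --- take the $\hat f(S)$ to be i.i.d.\ uniform in $\{\pm1\}$ and hope $|f|$ is large at every point --- does \emph{not} obviously work: for a fixed $x$, anti-concentration of the Rademacher sum $f(x)=\sum_S \hat f(S)\chi_S(x)$ of $2^n$ terms only yields $\Pr[\,|f(x)|<2^{(1/2-\delta)n}\,]\lesssim 2^{-\delta n}$, which a union bound over all $2^n$ points cannot absorb. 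One would therefore need either to localize the construction (which the tensor trick does for free) or to argue about the joint law of the $2^n$ values, so the deterministic gadget seems the cleanest path.
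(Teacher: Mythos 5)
Your proof is correct, and it takes a genuinely different route from the paper's. The paper argues probabilistically in the opposite direction from the attempt you dismiss: it fixes the \emph{values} $f(x)=z_x$ to be i.i.d.\ uniform in $\{\pm 2^{(1/2-\delta)n}\}$ and shows via Hoeffding that each Fourier coefficient exceeds $1$ with probability only $\exp(-\Omega(2^{2\delta n}))$, which comfortably survives the union bound over the $2^n$ sets $S$ --- so your closing remark, while accurate about randomizing the Fourier coefficients, does not apply to the paper's randomization of the values. Your deterministic construction is the (scaled) inner-product bent function: $g(x_1,x_2)=1+x_1+x_2-x_1x_2$ has $|g|\equiv 2$ and unit Fourier coefficients, and multiplicativity of both the pointwise values and the Fourier transform under tensoring on disjoint coordinates gives $|f|\equiv 2^{n/2}$ (or $2^{(n-1)/2}$ for odd $n$) with $|\hat f(S)|=1$ for all $S$; the odd-$n$ padding and the requirement $n\ge 1/(2\delta)$ are handled correctly. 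What your approach buys is an explicit, loss-free instance: by Parseval, $\min_x|f(x)|\le 2^{n/2}$ whenever $|\hat f(S)|\le 1$, so your bound is optimal and in particular removes the $2^{-\delta n}$ slack, whereas the paper's probabilistic argument is shorter to state and is stylistically consistent with the average-case construction it uses later in Section 5. Either proof suffices for the downstream use in Theorem~\ref{thm:worst-hard-instance}.
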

\begin{proof}
Let $f$ be a random boolean function, \textit{i.e.} $f(x) = z_x$, where $z_x$ is uniformly drawn from $\{\pm 2^{(1/2-\delta)n}\}$ independently, for all $x \in \{\pm 1\}^n$.
We will show that such an $f$ will have the desired properties with high probability.
Note that for any $S\subseteq [n]$, 
\begin{align*}
    \wh{f}(S) = 2^{-n} \textstyle\sum_{x \in \{\pm 1\}^n} \chi_S (x) \cdot z_x
\end{align*}
is an average of $2^n$ independent random variables with coefficients bounded by $2^{(1/2-\delta)n}$, so by Hoeffding's inequality, 
\[
    \Pr_f[|\wh{f} (S)| > 1] \le \exp(-\Omega(2^{2\delta n})).
\]
Hence by a union bound over all $S\subseteq [n]$, we know with probability $1-\exp(-\Omega(2^{2\delta n}))$, $|\wh{f}(S)|\le 1$ for all $S$, so the lemma follows.
\end{proof}

When we want to construct \textit{low-degree} $f$ with large absolute minima, it is less clear how to make a probabilistic construction go through.
However, it turns out we can get an explicit family by making use of an elementary number-theoretic observation.
Consider $f(x)=\sum_{i<j}x_ix_j = \big((\Sigma_ix_i)^2-n\big)/2$.
Now $(\Sigma_ix_i)^2$ is always a perfect square, but for infinitely-many $n$, $n$ is $\Omega(\sqrt{n})$-far from any perfect square.
Thus for these $n$ we have $\min_x|f(x)|\geq \Omega(\sqrt{n})$.
Tensorizing this idea yields the following.
\begin{lemma}
\label{lem:local-boolean-instance}
    Let $n$ be sufficiently large.
    Then, there is a degree $d \ge 2$ function $f: \{\pm 1\}^n \to \R$ such that $\|\wh f\|_\infty\leq 1$ and $|f(x)| \geq \Omega(\sqrt{n/(2d)}^{\lfloor d/2 \rfloor})$ for all $x \in \{\pm 1\}^n$. 
\end{lemma}

\begin{proof}
    Let $s = \lfloor \sqrt{n/d} \rfloor$ and $\ell = s^2 + s$. Since $(d/2) \cdot \ell \le n$, we can make $\lfloor d/2 \rfloor$ disjoint sets $S_1, S_2 \dots S_{d/2} \subset [n]$, with the same size $\ell$.

    Let \[
        g_t(x) = \sum_{\{i,j\}\subset S_t} x_i x_j = \frac{\big(\textstyle\sum_{i\in S_t} x_i\big)^2 - \ell}{2}, 
    \]
    and notice that $\sum_{i\in S_t} x_i \in \mathbb{Z}$, and since the distance of $s^2 + s$ to any perfect square is at least $s$, we have $|g_t(x)| \ge s$, for all $x\in \{\pm 1\}^n$.

    Let $f$ be the following:
    \[
        f(x) := \prod_{t=1}^{d/2} g_t(x),
    \] which is a degree-$d$ Boolean function with coefficients in $\{0,1\}$.
    Hence for all $x\in \{\pm 1\}^n$,
    \[
        |f(x)| = \prod_{t=1}^{\lfloor d/2\rfloor} |g_t(x)| \ge s^{\lfloor d/2\rfloor} = \lfloor \sqrt{n/d} \rfloor^{\lfloor d/2\rfloor} \ge \sqrt{n/(2d)}^{\lfloor d/2\rfloor}\,.\qedhere
    \]
\end{proof}

\subsection{Large absolute minima for $d\leq O(n^{1/3})$}
\label{sec:krawtchouk}

In this section we exhibit an (essentially) explicit class of degree-$d$ polynomials $f$ with $\gamma(f)$ larger than in \Cref{lem:local-boolean-instance}, in the setting of $d$ not too large.
Instead of generalizing the degree-two example by tensorization, here we generalize by considering higher-degree elementary symmetric polynomials.
To that end, let $e_d^{(n)}=\sum_{S\subseteq[n]:|S|=d}\chi_S$, the multilinear degree-$d$ elementary symmetric polynomial on $n$ variables.
Then we have the following family with large absolute minima.

\begin{theorem}
    \label{thm:high-disc-poly}
    There exists a universal $c>0$ and $N\in \mathbb N$ such that following holds.
    For all $n\geq N$ and any $d\leq cn^{1/3}$, there exists a degree-$d$ polynomial $f_{n,d}:\{\pm 1\}^n\to \R$ with $\|\wh{f_{n,d}}\|_\infty\leq 1$ such that
    \[\min_{x\in\{\pm 1\}^n} |f_{n,d}(x)|\;\geq\; \Omega\!\left(\frac{1}{d^{5}}\left(\frac{n}{d}\right)^{\frac{d-1}{2}}\right).\]
    In particular, $f_{n,d}=e^{(n^*)}_d$ for some $n^*\in[n-O(\sqrt{n/d}),n]$.
\end{theorem}

Note that with $t=t(x)=|\{j:x_j=-1\}|=(n-\sum_jx_j)/2$, we have
    \[e_d(x)=\sum_{j=0}^d(-1)^j\binom{t}{j}\binom{n-t}{d-j}=K_d(t;n),\]
the (binary) Krawtchouk polynomials.
The main idea of our argument is to find an $n^*$ not too far from $n$ for which all the roots of $K_d(\,\cdot\,;n):\R\to\R$ are $\Omega_d(1)$-far from $\Z$.
Combining this with the fact that $K_d(\,\cdot\,;n)$ is degree-$d$ and has well-spaced roots, we conclude that $K_d(\,\cdot\,;n)$ is sufficiently large at every integer, and hence $e^{(n^*)}_d$ has large absolute minimum on $\{\pm 1\}^n$.

To find a suitable $n^*$, we will actually track the $j$\textsuperscript{th} root of $K_d(\,\cdot\,;n)$ as $n$ changes, show it is often far from $\Z$, and then union bound over the $d$-many roots in $K_d(\,\cdot\,;n)$ to find an $n^*$ where all roots are simultaneously far from $\Z$.

To execute this approach, the key technical fact we must argue is that as $n$ changes, the $j$\textsuperscript{th} root shifts in a fixed direction by more than 1, but not too much more: precisely with asymptotic behavior $1+\Theta_d(n^{-1/2})$.
We will turn to this root increment lemma next.

The proof of \Cref{thm:high-disc-poly} will in several places require some basic facts about Krawtchouk polynomials; for details we direct the reader to \cite[\S9.11]{Koekoek2010}.
We will also need a result of Krasikov and Zarkh \cite{KRASIKOV2009121} on the root spacing of $K_d(t;n)$.
With $\Delta_\text{min}=\Delta_\text{min}(m,d)$ denoting the smallest distance between roots of $K_d(\,\cdot\,; m)$, we have the following.
\begin{fact}[{\cite[Corollary 1]{KRASIKOV2009121}}]
\label{fact:root-space}
    The minimum distance between roots of $K_d(\,\cdot\,;m)$ is bounded as $\Delta_\textnormal{min}\geq\sqrt{2m/d}$ provided $d\leq m/2$.
\end{fact}

\begin{lemma}[Root Increment Lemma]
    \label{prop:root-inc}
    Let $\xi_j^{(m)}$ be the $j$\textsuperscript{th} root of $K_d(\,\cdot\,;m)$ greater than $m/2$.
    Then for all $m\in [n/2,n]$ and assuming $2d+2\leq m$,
    \[\xi_j^{(m)}-\xi_j^{(m-2)}\in 1+\left(\Omega\left(\frac{1}{\sqrt{dn}}\right),O\left(\sqrt{\frac{d}{n-2d}}\right)\right)\]
\end{lemma}

The proof of the root increment lemma relies on the theory of orthogonal polynomials, for which \cite[Chapter 1]{Gautschi2004-il} is a good reference.
We will also need a standard fact about analytic perturbations of symmetric matrices.

\begin{fact}[cf.~{\cite[Chapter II, Theorems 5.4 and 6.8]{Kato1995}}]
\label{fact:mat-perturb}
Let $A(u)$ be a $C^1$ family of real symmetric matrices for $u$ in an interval $I$.
Then the eigenvalues of $A(u)$, counted with multiplicity, admit $C^1$
labelings $\mu_1,\dots,\mu_N:I\to\R$.
Moreover, if for some $k$ the branch $\mu_k(u)$ is simple for every $u\in I$,
then for every $u\in I$ and every unit eigenvector $v$ of $A(u)$ with
eigenvalue $\mu_k(u)$,
\[
\mu_k'(u)=\langle v,A'(u)v\rangle.
\]
\end{fact}

\medskip
\begin{proof}[Proof of \Cref{prop:root-inc}]
It is convenient to consider the orthonormalized Krawtchouk polynomials 
\[p_j(t;m):=\frac{K_j(t;m)}{\sqrt{\binom{m}{j}}}\,,\]
which are orthonormal with respect to the binomial measure $\mu_m(t)=2^{-m}\binom{m}{t}$ \cite[\S9.11]{Koekoek2010}.
These satisfy the recurrence
\[tp_j(t;m)=a_{j+1}(m)p_{j+1}(t;m)+\frac m2p_j(t;m)+a_j(m)p_{j-1}(t;m)\]
\[\text{with}\qquad a_0=0\qquad\text{and}\qquad a_j(m)=\frac12\sqrt{j(m-j+1)},\quad j\geq 1.\]
It is a standard fact in the theory of orthogonal polynomials (see \textit{e.g.,} \cite[Theorem 1.31]{Gautschi2004-il}) that the roots of $p_d(\,\cdot\,;m)$ (and consequently those of $K_d(\,\cdot\,;m)$) are the eigenvalues of the Jacobi matrix
\[J_d(m) :=  \begin{pmatrix}
\frac m2 & a_1(m) &  &  &  \\
a_1(m) & \frac m2 & \ddots &  &  \\
 & \ddots & \ddots & a_{d-2}(m) &  \\
 &  & a_{d-2}(m) & \frac m2 & a_{d-1}(m) \\
 &  &  & a_{d-1}(m) & \frac m2 
\end{pmatrix}\,.\]
Then $J_d(m)=:\frac m2 I + T_d(m)$, and
\[\xi_j^{(m)}=\frac m2+\lambda_j^+(m),\]
where $\lambda_j^+(m)$ is the $j$\textsuperscript{th} positive eigenvalue of $T_d(m)$.

\medskip

\noindent\textbf{Upper bound.}
Using that the operator norm is upper-bounded by the maximum row sum, we get
\[\|T_d(m)-T_d(m-2)\|_\text{op}\leq \max_{1\leq j\leq d-1}2\big(a_j(m)-a_j(m-2)\big).\]
Now
\[a'_k(u)=\frac14\sqrt{\frac{k}{u-k+1}},\]
so there is some $u^*\in[m-2,m]$ such that
\[a_j(m)-a_j(m-2)=2a_j'(u^*)\leq \frac12\sqrt{\frac{j}{(m-2)-j+1}}\leq \frac12\sqrt{\frac{d}{m-d-1}}\leq \frac12\sqrt{\frac{2d}{n-2d}}\]
Combining these last displays gives
\[\|T_d(m)-T_d(m-2)\|_\text{op}\leq \sqrt{\frac{2d}{n-2d}}\,.\]
Thus by Weyl's inequality,
\[\xi_j^{(m)}-\xi_j^{(m-2)}=1+\big(\lambda_j^+(m)-\lambda_j^+(m-2)\big)\leq 1+\|T_d(m)-T_d(m-2)\|_\text{op}\leq 1+\sqrt{\frac{2d}{n-2d}},\]
finishing the upper bound on the finite difference interval.

\medskip
\noindent\textbf{Lower bound.}
Fix a positive eigenvalue $\lambda(u)=\lambda_j^+(u)$ of $T_d(u)$ and let $v=v(u)$ be a corresponding unit eigenvector.
Set
\[
w_k:=2a_k(u)v_kv_{k+1},
\qquad
c_k(u):=\frac{a_k'(u)}{a_k(u)}=\frac1{2(u-k+1)}.
\]
Observe that for all $k$,
\begin{equation}
\label{eq:a-ratio}
c_k(u)=\frac{a_k'(u)}{a_k(u)}\geq \frac1{2u};
\end{equation}
let us show that also
\begin{equation}
\label{eq:lambda-ratio}
    \frac{\lambda'(u)}{\lambda(u)}\geq \frac1{2u}.
\end{equation}

From the definitions so far,
\[
\lambda(u)=\langle v,T_d(u)v\rangle=\sum_{k=1}^{d-1} w_k,
\]
We would like to differentiate $\lambda(u)$ for
$u\in[m-2,m]$.
Note that for all such $u$ the off-diagonal entries of $T_d$ are nonzero, and so $T_d(u)$ has simple spectrum for $u\in[m-2,m]$ \cite{SymmEigs}.
Moreover, $T_d(u)$ is a $C^1$ family of real symmetric matrices, so \Cref{fact:mat-perturb} implies that
$\lambda(u)$ is a $C^1$ function on $[m-2,m]$.
Hence, for each $u\in[m-2,m]$ and any unit eigenvector $v$ of $T_d(u)$ with
eigenvalue $\lambda(u)$,
\begin{equation}
    \label{eq:eig-deriv}
    \lambda'(u)=\langle v,T_d'(u)v\rangle
    =\sum_{k=1}^{d-1} c_k(u)\,w_k .
\end{equation}


Also, from the eigenvalue equation \(T_d(u)v=\lambda v\),
\[
w_{2r-1}+w_{2r}=2\lambda v_{2r}^2\ge0,
\qquad
w_{2r}+w_{2r+1}=2\lambda v_{2r+1}^2\ge0,
\]
and at the boundary,
\[
w_{d-1}=2\lambda v_d^2\ge0,
\]
so every tail sum \(\sum_{k=\ell+1}^{d-1}w_k\) is nonnegative and we may apply summation by parts to \eqref{eq:eig-deriv}:
\[
\lambda'
=
c_1(u)\sum_{k=1}^{d-1}w_k
+
\sum_{\ell=1}^{d-2}\big(c_{\ell+1}(u)-c_\ell(u)\big)\sum_{k=\ell+1}^{d-1}w_k.
\]
Using that $c_k(u)$ is increasing in $k$, we get
\begin{equation}
    \label{eq:lambda-ratio-rearr}
    \lambda'\geq c_1\sum_{k=1}^{d-1}w_k=\frac{1}{2u}\lambda,
\end{equation}
which is \eqref{eq:lambda-ratio}.
Equivalently,
\[
\frac{d}{du}\log \lambda_j^+(u)\ge \frac{1}{2u}.
\]
Integrating from \(u=m-2\) to \(u=m\) gives
\[
\lambda_j^+(m)\ge \sqrt{\frac{m}{m-2}}\;\lambda_j^+(m-2).
\]
Thus it remains to lower bound $\lambda^+_j(m-2)$ uniformly in $j$, which is equivalent to getting a lower bound on the smallest positive eigenvalue of $T_d(m-2)$.
But the smallest positivie eigenvalue of $T_d(m-2)$ is nothing but the distance of the first root of $K_d(\,\cdot\,;m-2)$ past $(m-2)/2$.
The roots of $K_d(\,\cdot\,;m-2)$ are symmetric about $(m-2)/2$, so by \Cref{fact:root-space}, we get a lower bound of half the spacing:
\[\lambda^{+}_j(m-2)\geq \Delta_\text{min}(m-2,d)/2\gtrsim\sqrt{m/d}\,.\]

In conclusion, we obtain
\[\lambda_j^+(m)-\lambda_j^+(m-2)\geq \left(\sqrt{\frac{m}{m-2}}-1\right)\lambda_j^+(m-2)\gtrsim\sqrt{\frac{1}{md}}\geq \frac{1}{\sqrt{nd}}\,.\]

The lemma is proved by substituting back into $\xi_j^{(m)}-\xi_j^{(m-2)}=1+(\lambda_j^+(m)-\lambda_j^+(m-2))$.
\end{proof}

\begin{corollary}
    \label{cor:z-dist}
    There exist absolute constants $B,c>0$ such that for all $d\leq cn^{1/3}$, there is an (odd) $n^*\in(n-B\sqrt{n/d},n)$ such that all roots of $K_d(\,\cdot\,;n^*)$ are at least $\Omega(1/d^2)$-away from $\Z$.
\end{corollary}
\begin{proof}
    When $m$ is even and $d$ is odd, the Krawtchouk polynomial $K_d(t;m)$ has a root at $m/2\in \Z$, so to avoid case analysis we restrict to odd $m$, denoted by $m\in I_\mathsf{odd}:=(2\Z+1)\cap [n-B\sqrt{n/d},n]$.
    Note that $K_d(t;m)$ is symmetric about $m/2$, which for odd $m$ is always a half-integer.
    It thus suffices to argue the corollary only for roots past $m/2$.
    To that end fix the $j$'th such root. 
    
    By the upper bound in the Root Increment Lemma (\Cref{prop:root-inc}) as $m\in I_\mathsf{odd}$ goes from $n-B\sqrt{n/d}$ to $n$, the $j$\textsuperscript{th} root $\xi_j^{(m)}$ crosses 0 mod 1 at most $O(B\sqrt{n/d}\sqrt{d/n})=O(B)$ times.
    Let $r_d>0 $ denote a small radius about $0$.
    From the increment lower bound, we get that each time we pass $0$, the number of $\xi_j^{(m)}$ within $r_d$ of $0$ is at most $\max\{1,O(r_d/(1/\sqrt{dn}))\}=\max\{1,O(r_d\sqrt{dn})\}$.
    As a result, among $m\in I_\mathsf{odd}$, the number of roots $\{\xi_j^{(m)}\}_{m\in I_\mathsf{odd}}$ within $r_d$ of 0 is at most
    \[\max\{1,O(r_d\sqrt{dn})\}\cdot O(B)=\max\{O(B),O(Br_d\sqrt{dn})\}.\]
    Setting $r_d=C/d^2$ gives
    \[
    r_d\sqrt{dn} = C\frac{\sqrt n}{d^{3/2}}.
    \]
    Hence the fraction of bad indices satisfies
    \[
    \frac{\max\{O(B),O(Br_d\sqrt{dn})\}}{B\sqrt{n/d}}
    \;\lesssim\;
    \max\!\left\{\sqrt{\frac{d}{n}},\frac{C}{d}\right\}.
    \]
    For $d\le c n^{1/3}$ and sufficiently small $c$ and $C$, this is at most $1/(2d)$.
    In conclusion, the \textit{fraction} of the $\xi_j^{(m)}$'s, $m\in I_\mathsf{odd}$, within $r_d$ of an integer is at most $1/(2d)$.
    By a union bound, this means there exists an $n^*\in I_\mathsf{odd}$ such that \emph{all} roots $\xi_j^{(n^*)}$ are bounded away from $\Z$ by $C/d^{2}$, as desired.
\end{proof}

\noindent We are now ready to finish the main theorem.

\begin{proof}[Proof of \Cref{thm:high-disc-poly}]
    $K_d(t;m)$ factorizes as
    \[K_d(t;m) = \frac{(-2)^d}{d!}\prod_{j=1}^d(t-\xi_j^{(m)})\]
    Fix an integer $t$ and let $J=J(t)$ be the index of a closest root to $t$.
    Then
    \[|t-\xi_{J\pm k}|\geq \left(k-\tfrac12\right)\Delta_\mathrm{min}\,.\]
    and the worst (smallest) case is when $J$ is in the middle, so we get uniformly over $t\in \Z$,
    \begin{align*}
    |K_d(t;m)| &\geq \frac{2^d}{d!}\cdot 
    \mathop{\mathrm{dist}}\!\Big(\mathop{\mathrm{Roots}}\!\big(K_d(\,\cdot\,;m)\big)\,,\,\Z\Big)\cdot\Delta_\text{min}^{d-1}
    \cdot\left(\prod_{k=1}^{\lfloor(d-1)/2\rfloor}(k-\tfrac12)\right)^2\\
    &\gtrsim \frac{1}{d^3}\cdot\mathop{\mathrm{dist}}\!\Big(\mathop{\mathrm{Roots}}\!\big(K_d(\,\cdot\,;m)\big)\,,\,\Z\Big)\cdot\Delta_\text{min}^{d-1},
    \end{align*}
    where $\mathop{\mathrm{dist}}\!\Big(\mathop{\mathrm{Roots}}\!\big(K_d(\,\cdot\,;m)\big)\,,\,\Z\Big)$ denotes the quantity $\min_{s\in \Z,\,t\in \R:K_d(t;m)=0}|t-s|$.
    Setting $m=n^*$ from \Cref{cor:z-dist} and applying \Cref{fact:root-space} yields the final bound
    \[\min_{x\in\{\pm1\}^n}|e^{(n^*)}_d(x)|\geq\min_{t\in \Z}|K_d(t;n^*)|\gtrsim \frac{1}{d^{5}}\left(\frac{n}{d}\right)^{\frac{d-1}{2}}\,.\qedhere\]
\end{proof}

\section{Hardness of learning all parameters and effective Hamiltonians}

We now turn our attention to our lower bounds for learning all the parameters (Theorem~\ref{thm:all-parameters-main}), and for effective Hamiltonian learning (Theorem~\ref{thm:effective-informal}), with algorithms in the sequential experiments model.
It turns out that both lower bounds can be proved using the same general lower bound framework.
We first reduce them both to instances of the Hamiltonian hypothesis selection problem.
\begin{definition}
    We define \emph{Hamiltonian Hypothesis Selection} problem as follows. Let $\mathcal{S}$ be a hypothesis set of $n$-qubit Hamiltonians, and let let $M \in S$ be an unknown Hamiltonian from $\mathcal{S}$.
    Given time evolution access to $M$, the goal of the learner is to output an $M$ with probability at least $0.1$.
\end{definition}
For this problem, we show:
\begin{lemma}
\label{lem:subspace-holevo}
    Let $\mathcal{S}$ be a family of $n$-qubit Hamiltonians.
     Any algorithm in the sequential experiments model (as defined in \Cref{def:access}) for the hypothesis selection problem for $\mathcal{S}$ requires at least $m\ge \Omega(\log(|\mathcal{S}|)/ n)$ rounds of interactions with the unknown Hamiltonian.
\end{lemma}
\noindent
The key technical ingredient we need for proving~\Cref{lem:subspace-holevo} is the following geometric lemma, which states that we can control the geometry of the quantum information that any algorithm in the sequential experiments model can obtain:
\begin{lemma}
\label{lem:subspace-counting}
    For any fixed unitaries $U_j$, let $S_m (U_1, \ldots, U_m)$ denote the following subset of $(\mathbb{C}^2)^{\otimes (n+r)}$:
    \[
        S_m(U_1,U_2,\cdots U_m) := \left\{ \ket{\varphi} = \prod_{j = 1}^m \left(V_j \otimes I \right) U_j \ket{\psi}: V_j \text{ are unitaries on the first $n$ qubits}, \forall j\in [m]
        \right\} \; .
    \]
    Then, $S_m (U_1, \ldots, U_m)$ lies in a subspace of dimension $2^{2mn}$.
\end{lemma}
\begin{proof}
    Let $N = 2^n$. We prove by induction on $m$. This is definitely true when $m = 0$. Suppose $S_{m-1}\subseteq \mathrm{span}(\ket{\alpha_1}\cdots \ket{\alpha_{L}})$, where $L \le 2^{2(m-1)n}$.

    Write $U_m\ket{\alpha_i} = \sum_{j=1}^{N} c_{i,j} \ket{j}\ket{\beta_{i,j}}$ for $i\in [L]$.
    Then
    \[
        (V_m \otimes I)U_m \ket{\alpha_i} = \sum_{j,k\in [N]} c_{i,j} (V_m)_{j,k} \ket{k} \ket{\beta_{i,j}} ] \; , 
    \]
    and so we have $S_m \subseteq \mathrm{span}(\{\ket{k}\ket{\beta_{i,j}}\}_{i\in [L],j,k\in [N]})$, which is a subspace of dimension at most $N^2L \le 2^{2mn}$, as we desired.
\end{proof}
\noindent With this, we can now prove~\Cref{lem:subspace-holevo}:
\begin{proof}[Proof of~\Cref{lem:subspace-holevo}]
    Let $X$ be a uniformly random Hamiltonian from the hypothesis set $\mathcal{S}$. 
    Recall that an algorithm in the sequential experiments model makes $R$ total rounds of measurements, for some parameter $R$.
    Let $\ell \leq R$ be some round of interaction.
    Condition on the measurement outcomes observed by the algorithms in rounds $1, \ldots, \ell - 1$.
    Recall that $m^{(\ell)}$ is the total number of rounds of interaction that the algorithm will make in the $\ell$-th round, and that by definition, $m \geq \sum_{\ell = 1}^R m^{(\ell)}$.
    We will argue that the algorithm can extract at most $2m^{(\ell)} n$ bits of information about $X$ in the $\ell$-th round.
    Let $Y^{(\ell)}$ be the measurement outcome of the algorithm in the $\ell$-th round, for $\ell = 1, \ldots, R$. Let $\ket{\varphi_X^{(\ell)}}$ be the (random) state that the algorithm measures in the $\ell$-th round of measurements, if the underlying Hamiltonian is $X$:
    \[
        \ket{\varphi_X^{(\ell)}} \coloneqq \prod_{j = 1}^{m^{(\ell)}} \left( e^{-iX t_j} \otimes I \right) U_j \ket{\psi^{(\ell)}} \, ,    
    \]
    Then, by~\Cref{lem:subspace-counting}, we know that $\ket{\varphi_X^{(\ell)}}$ lies in a fixed subspace of dimension $2^{2 m n}$.

    Then by Holevo's inequality \cite{nielsen2010quantum}, we know the mutual information between $X$ and $Y^{(\ell)}$ conditioned on $Y^{(0)},\dots ,Y^{(\ell-1)}$ is bounded:
    \[
        I(X:Y^{(\ell)} \mid Y^{(0)},\dots ,Y^{(\ell-1)}) \le S\left(\sum_{X\in \mathcal{S}} p_X^{(\ell)} \ketbra{\varphi_X^{(\ell)}}{\varphi_X^{(\ell)}}\right) \le \log(2^{2m^{(\ell)} n}) = 2m^{(\ell)}n,
    \]where $p_X^{(\ell)}$ is the conditional probability: $P(X\mid Y^{(0)},\dots ,Y^{(\ell-1)})$.
    
    By definition, we have $I(X:Y\mid Y^{(0)},\dots ,Y^{(\ell-1)}) = H(X\mid Y^{(0)},\dots ,Y^{(\ell-1)}) - H(X\mid Y)$, and so by telescoping, we obtain:
    \begin{align*}
        &  \qquad H(X) - H(X\mid Y^{(0)},\dots ,Y^{(R)}) \\ &= \sum_{\ell = 1}^R \E_{Y^{(0)},\dots ,Y^{(\ell-1)}}[H(X\mid Y^{(0)},\dots ,Y^{(\ell-1)}) - H(X\mid Y^{(0)},\dots ,Y^{(\ell)})]\\
        &= \E_{Z^{(0)},\dots,Z^{(R)}} \left[\sum_{\ell =1}^R H(X\mid Z^{(0)},\dots ,Z^{(\ell-1)}) - H(X\mid Z^{(0)},\dots ,Z^{(\ell-1)}, Y^{(\ell)})\right]\\
        &\le \E\left[\sum_{\ell=1}^R 2m^{(\ell)}n\right] \le 2mn.
    \end{align*}
    Hence, we have that $H(X\mid Y^{(0)},\dots ,Y^{(\ell)}) \ge \log(|\mathcal{S}|) - 2mn$. 
    
    On the other hand, by Fano's inequality \cite{cover2006elements}, since we assume that the learner correctly recovers $X$ with probability at least $0.1$, we obtain that:
    \[
        H(X\mid Y) \le 1 + 0.9 \log(|\mathcal{S}|) \; .
    \]
    \noindent
    Putting these bounds together, we obtain $\log(|\mathcal{S}|) - 2mn \le 1 + 0.9 \log(|\mathcal{S}|)$, and so $m\ge \Omega(\log(|\mathcal{S}|)/n)$, as claimed.
\end{proof}

\noindent
In the following, we apply this lemma to those learning problems containing a large $\Omega(1)$-packing set to obtain strong lower bounds.

\subsection{Lower bounds for learning all parameters: proof of~\Cref{thm:all-parameters-main}}

We first demonstrate how to instantiate this bound to obtain~\Cref{thm:all-parameters-main}.

\begin{proof}[Proof of~\Cref{thm:all-parameters-main}]
    Consider the hypothesis selection problem where hypothesis set $\mathcal{S}$ is defined to be: 
    \[\mathcal{S} \coloneqq \left\{ H=\sum_{P\in \mathcal{P}}\alpha_P P:\alpha_p\in\{\pm 1\} \right\} \; .\]
    An algorithm which learns all of the parameters up to accuracy $\eps = 0.99$ also solves this hypothesis selection problem with non-trivial probability. Hence by Lemma~\ref{lem:subspace-holevo}, the number of interactions is at least $m\ge \Omega(\log(|\mathcal{S}|)/n) = \Omega(|\mathcal{P}|/n)$.
\end{proof}
\noindent

\subsection{Lower bounds for learning effective Hamiltonians}

In this section, we show our lower bound against learning effective non-parametric approximations to the target Hamiltonian.
Formally, we show:

\begin{theorem}
\label{thm:effective-main}
    Let $\delta \in (0,1/2)$, and let $n$ be sufficiently large.
    Let $t^* > 0$ be some time parameter, and let $k > 0$.
 Then, any algorithm that solves the effective Hamiltonian learning problem for $k$-local, $n$-qubit Hamiltonians at time $t^*$ with $m \leq O(F(n,k)/n^{1+1/\log(1/\delta)})$ queries must pay error at least $\epsilon = \Omega (\delta \min (1, t^*))$, 
    where
    \[
    F(n,k) \coloneqq \max_{t\in(0,\frac12 - \frac{1}{\log n})}  \min\left(n^{(1/2-t)k},\exp(cn^{2t})\right) \; ,
    \]
    for some universal constant $c>0$.
\end{theorem}
\noindent
We note that achieving error $\eps=O (\min (1, t^*))$ is trivial under the assumption that the spectral norm of the Hamiltonian is at most $1$, as this error guarantee is achieved by outputting the identity matrix.

We pause here to interpret this theorem, and specifically, the quantity $F(n,k)$.
When $k$ is a small constant, note that $F(n, k) = n^{k/2 - o(1)}$.
More generally, by letting $t = \tfrac{1}{4} + \tfrac{1}{4} \tfrac{\log k}{\log n}$, one can show that $F(n, k) \geq \max \left( (n / k)^{\Omega (k)}, 2^{\Omega (k)} \right)$ for all $k \leq n / 2$---in particular, note this bound is always polynomial in the number of parameters in the system.
Thus our result shows that achieving any non-trivial error for this problem always requires a number of queries $m$ which is polynomial in the number of total parameters in the system, as long as $k \geq 3$.

Hence our lower bound states that no interesting learning guarantees are possible without $n^{O(k)}$-many queries to the time evolution of the unknown Hamiltonian.
On the other hand, we note that with $n^{\Omega (k)}$ queries, the algorithm of \textit{e.g.}~\cite{bakshi2024structure} already allows us to not only learn an effective Hamiltonian, but also learn the parameters of the underlying Hamiltonian.
Thus, at a high level, our result yields another qualitative ``all-or-nothing'' phenomenon for even this non-parametric notion of Hamiltonian learning: it states that up to the constant in the exponent of $k$, one cannot hope for more efficient algorithms by relaxing the problem from learning the parameters to this weaker guarantee.

\subsection{An $\ell_\infty$ packing of degree-$k$ polynomials on the hypercube}

In this subsection we show that there exists a large packing set of $\ell_\infty$ bounded degree-$k$ polynomials on the hypercube.

\begin{lemma}
\label{lem:packing-ell-infty}
    For all $\delta \in (0,1/2)$, there exists a set $S$ of polynomials with degree at most $k$ of size $K$ such that for any $f \in S$, $\norm{f}_\infty \le 1$ and for all $f, g \in S$ so that $f \neq g$, we have that $\norm{f-g}_\infty \ge \delta$, where $K$ is given by:
    \[
        K = \exp\left(\Omega \left( \frac{F(n, k)}{n^{1/\log(1/\delta)}} \right)\right) \; ,
    \]
\end{lemma}
\noindent 
For any $k$, let $M_k(v) = \max_{x\in \{\pm 1\}^n}\langle v,x^{\otimes k} \rangle$, for $v \in \mathbb{R}^{n^k}$.
The proof of Lemma~\ref{lem:packing-ell-infty} relies on the following key lemma:
\begin{lemma}
\label{lem:packing-gaussian}
    For $g\sim N(0,I_{n^k})$, 
    \[
        \Pr[M_k(g) \le n^{k/2}]\le \exp(-F(n,k)).  
    \]
    where $c>0$ is some universal constant.
\end{lemma}
\begin{proof}
    Let $t\in (0,\frac12-\frac{1}{\log n})$ be any constant.
    First we show that there exists a set $V\subseteq \{\pm 1\}^n$ with $|V| \ge \Omega(\exp(n^{2t}))$ such that for any $x\neq x'\in V$, we have that $|x\cdot x'|\ge n^{1/2+t}$. 
    
    We show this by a volume argument. 
    Suppose $V$ is a set satisfying for any $x\neq x'\in V$.
    Then $|\langle x, x'\rangle| \ge n^{1/2+t}$.
    We will show that unless $|V| \le \exp(\Omega(n^{2t}))$, we can add a point to $V$ while maintaining this property, which will prove the claim.
    Consider sampling a random point $y\sim_U \{\pm 1\}^n$. 
    For any $x$, we have that
    \[
    \Pr \left[ |\langle x, y\rangle| > n^{1/2+t} \right] \leq \exp(-\Omega(n^{2t})) \; .
    \]
    Hence by union bound we have
    \[
        \Pr[|\langle x, y\rangle| \le n^{1/2+t}, \forall x\in |V|] \ge 1 - |V| \cdot \exp(-\Omega(n^{2t})) > 0 \; ,
    \]
    unless $|V| \le \exp(\Omega(n^{2t}))$, so there exists a $y$ we can add to $V$, as claimed.


    Let $V$ be such a set.
    Define the Gaussian process $X_x = \langle g,x^{\otimes k} \rangle$ for $x\in V$.
    Then, we have that $\E[X_x^2] = n^{k}$ and $\E[X_xX_y] \le n^{(1/2+t)k}$ for $x\neq y$. Hence by Slepian's lemma \cite{vershynin2018high},
    \[
        \Pr[\max_{x\in C} X_x \le n^{k/2}] \le \Pr_{z \sim N(0,A)}\left[\max_{i\in C} z_i \le n^{k/2}\right],
    \]where $A = (n^{k}-n^{(1/2+t)k})I + n^{(1/2+t)k}\mathbf{1}\mathbf{1}^T$.

    Let $h,g_i$ for $i\in V$ be independent univariate Gaussians, where $\E[h^2] = n^{(1/2+t)k}$ and $\E[g_i^2] = n^k - n^{(1/2+t)k}$.
   Then $z$ has same distribution with $(g_i+h)_{i\in V}$.
   Note that since $t\le \frac12 - \frac{1}{\log n}$, we have that $\E[g_i^2] \geq n^{k} / 2$.
    Then by union bound and independence of $g_i$, $i\in V$:
    \begin{align*}
        \Pr[\max_{i\in C}g_i + h\le n^{k/2}] &\le \Pr[h\le -n^{k/2}] + \Pr[\max_{i\in V}g_i \le 2n^{k/2}] \\
        &\le \exp(-\Omega(n^{(1/2-t)k})) + \exp(-\Omega(|V|)),
    \end{align*}
    as we desired.
\end{proof}
\begin{proof}[Proof of~\Cref{lem:packing-ell-infty}] We show the existence of this large packing set by a probabilistic argument. 
Let $\zeta>0$ be a parameter to be specified later.
We take $L$ independent $g_1, \ldots, g_L \sim N(0,I)$, and delete those $g_i$ which violate the condition that $|M_k(g)|\le \zeta$ or any pair of $g_i \neq g_j$ that violate the condition that $|M_k(g_i-g_j)| \ge \delta \zeta $. After this deleting step, define the set of polynomials $f_i = \langle g_i, x^{\otimes k}\rangle / \zeta$ for all remaining $g_i$.
Then, this set of functions $f_i$ satisfies the conditions of the packing.
It thus suffices to show that there exists such a $\zeta$ so that the expected number of $g_i$ which remain after this deletion step is large.

Let $p (\zeta) = \Pr_{g\sim N(0,I)}[|M_k(g)| \le \zeta]$. Notice that for $i \neq j$, we have that $g_i-g_j$ has same distribution as $N(0,2I)$, so for any pair $i \neq j$, the probability that $|M_k(g_i-g_j)| \ge \delta\zeta$ is violated is $p(\delta\zeta/\sqrt{2}) \le p(\delta \zeta)$. 
Hence the expected size of the resulting set after deleting all violating elements is at least $L p(\zeta) - L^2 p(\delta\zeta)$.
Let $L = p(\zeta) / (2p(\delta\zeta))$, so that the expected size becomes $p^2(\zeta)/4p(\delta\zeta)$.
Hence we only need to show the existence of $\zeta$ such that $p^2(\zeta)/p(\delta \zeta)$ is at least $K$.

Suppose $p^2(\zeta) \le K p(\delta\zeta)$, for all $\zeta$. Then 
\[(p(\zeta)/K)^2 \le p(\delta \zeta)/K.\]
Iteratively applying this $\lceil\log n/\log (1/\delta)\rceil+2$ times, we get
\[(p(\zeta)/K)^{8n^{1/\log(1/\delta)}} \le p(\zeta/(2n))/K.\]
If we let $\zeta=2n^{(k+1)/2}$, then we have that $p(\zeta) \ge 1/2$ by simple union bound, but on the other hand, we have that $p(\zeta/(2n)) \le p(n^{k/2}) \le (2K)^{-10n^{1/\log(1/\delta)}}$ by Lemma~\ref{lem:packing-gaussian}.
Putting these inequalities together, we obtain,
\[
    (1/2K)^{8n^{1/\log(1/\delta)}} \le (2K)^{-10n^{1/\log(1/\delta)}}
\] which is a contradiction.
This finishes the proof.
%
\end{proof}

\begin{proof}[Proof of \Cref{thm:effective-main}]
    Suppose the algorithm learns the effective Hamiltonian up to accuracy $\eps = c \delta \cdot \min(t^*,1)$ for some $c$ sufficiently small. 
    Then it can be used for the hypothesis selection problem where the hypothesis set $\mathcal{S}$ is a set satisfying the condition that for any two $H\neq H'\in \mathcal{S}$, we have that $\norm{\mathcal{U}(e^{-iHt^*})-\mathcal{U}(e^{-iH't^*})}_\diamond \ge 3\eps$.
    
    Construct the hypothesis set as follows. 
    Let $\mathcal{F}$ be the set of polynomials on the hypercube in Lemma~\ref{lem:packing-ell-infty}. 
    For each function $f$ we define its corresponding Hamiltonian $M^{(f)}$ where it is only supported on $Z_S$ and its vector $\vec{\alpha}$ is defined as $\alpha_{Z_S} = \widehat{f}(S)$ for all $|S|\le k$. Then $\langle x| M^{(f)}| x \rangle = f(x)$ for all $x\in \{\pm 1\}^n$. Since $M^{(f)}$ is diagonal under computational basis and $\norm{f}_\infty \le 1$, we have $\norm{M^{(f)}}_\infty \le 1$.

    Notice that $\E_{x\in \{\pm 1\}^n} f(x) \in [-1,1]$ for $f\in \mathcal{F}$. For $p\in [-1,1]$, we define $\mathcal{F}_p = \{f:\E_{x\in \{\pm 1\}^n} f(x) \in [p,p+0.1\delta]\}$. By the pigeonhole principle, there exist $q$ such that $|\mathcal{F}_q| \ge 0.1 \cdot \delta |\mathcal{F}|$.

    When $t^*>1$, so the desired accuracy is $\eps = c \delta$, we can reduce this to the case where $t^* = 1$ by considering the hypothesis set $\mathcal{S} = \{M^{(f)}/t^*\}_{f\in \mathcal{F}_q}$. When $t^*\le 1$, so the desired accuracy is $\eps=c\delta t^*$, we let the hypothesis set $\mathcal{S}$ be $\{M^{(f)}\}_{f\in \mathcal{F}_q}$. Then for any $f\neq g\in \mathcal{F}$,
    \begin{align*}
        \norm{\mathcal{U}(e^{-iM^{(f)}t^*})-\mathcal{U}(e^{-iM^{(g)}t^*})}_\diamond &\ge \dph(e^{-iM^{(f)}t^*}, e^{-iM^{(g)}t^*}) \\
        &= \min_{\theta\in[-\pi,\pi]} \norm{e^{-iM^{(f)}t}\cdot e^{i\theta}-e^{-iM^{(g)}t^*}} \\
        &= \min_{\theta\in [-\pi,\pi]} \max_{x\in\{\pm 1\}^n} |e^{-if(x)t^*+i\theta} - e^{-ig(x)t^*}| \\
        &= \min_{\theta\in [-\pi,\pi]} \max_{x\in\{\pm 1\}^n} |1 - e^{i(f(x)t^*-g(x)t^*-\theta)}| \\
        &\ge \min_{\theta\in [-\pi,\pi]} \max_{x\in\{\pm 1\}^n} 0.2\cdot |f(x)t^*-g(x)t^*-\theta| \\
        &= 0.2\min_{\theta\in [-\pi,\pi]} \norm{t^*(f-g)-\theta}_\infty,
    \end{align*}
    where the last inequality is by that $|f(x)t^*-g(x)t^*-\theta| \le 2+\pi (< 2\pi)$.

    Now for $|\theta| \le 0.3\delta t^*$, we have $ \norm{t^*(f-g)-\theta}_\infty \ge  \norm{t^*(f-g)}_\infty-\theta \ge \delta t^* - \theta \ge 0.7 \delta t^*$. For $|\theta| > 0.3 \delta t^*$, we have $\norm{t^*(f-g)-\theta}_\infty \ge |\E_{x\in \{\pm 1\}^n}[t^*(f(x)-g(x))-\theta]| \ge \theta - 0.1\delta t^* = 0.2\delta t^*$. Hence in either case we have the diamond distance is at least $3\eps$.
    

    By plugging in Lemma~\ref{lem:packing-ell-infty} and Lemma~\ref{lem:subspace-holevo}, the number of interactions $m$ for the algorithm is at least 
    \[\Omega(\log(|\mathcal{F}_q|)/n) \ge \Omega(F(n,k)/n^{1+1/\log(1/\delta)}) \; .\qedhere\]
\end{proof}

\bibliographystyle{abbrvnat}
\bibliography{bibliography}


\end{document}